\newcommand{\colorof}{{\mathfrak{C}}}
\newcommand{\NN}{\text{NN}}
\newcommand{\mcs}{\text{MCS}\xspace}
\newcommand{\cspg}{{CSPG}\xspace} 
\newcommand{\fpt}{\text{FPT}\xspace}
\newcommand{\np}{\textup{\textsf{NP}}\xspace}
\newcommand{\npc}{\textsf{NP}-complete\xspace}
\newcommand{\bigoh}{\mathcal{O}}
\newcommand{\bigohstar}{\mathcal{O}^{*}}
\newcommand{\nd}{\text{ND}}
\newcommand{\DistSet}[2]{N^{#2}(#1)}
\newcommand{\DistSeta}[3]{N^{#2}_{#3}(#1)}
\newcommand{\U}{\mathcal{U}}
\newcommand{\Hset}{\mathcal{H}}
\newcommand{\HS}{\mathcal{HS}}
\newcommand{\D}{\mathcal{D}}
\newcommand{\F}{\mathcal{F}}
\newcommand{\R}{\mathcal{R}}
\newcommand{\T}{\mathcal{T}}
\newcommand{\rc}{\mathsf{c}}
\definecolor{corn}{HTML}{555555}
\newtheorem{theorem}{Theorem}[section]
\newtheorem{observation}[theorem]{Observation} 
\newtheorem{lemma}[theorem]{Lemma} 
\newtheorem{proposition}[theorem]{Proposition}   
\newtheorem{claim}[theorem]{Claim}
\title{Learning with Structure: Computing Consistent Subsets \\on Structurally-Regular Graphs}
\author{
	Aritra Banik\thanks{National Institute of Science, Education and Research, An OCC of Homi Bhabha National Institute, Bhubaneswar, India.}
	\and
  Mano Prakash Parthasarathi\thanks{North Carolina State University, Raleigh, NC, USA.}
  \and
  Venkatesh Raman\thanks{(Retd) The Institute of Mathematical Sciences, HBNI, Chennai, India.}
  \and
  Diya Roy\footnotemark[1]
  \and
  Abhishek Sahu\footnotemark[1]
}
\begin{document}

  \pagenumbering{gobble}
  \maketitle

  \begin{abstract}
    The Minimum Consistent Subset (\mcs) problem arises naturally in the context of supervised clustering and instance selection. In supervised clustering, one aims to infer a meaningful partitioning of data using a small labeled subset. However, the sheer volume of training data in modern applications poses a significant computational challenge. The \mcs problem formalizes this goal: given a labeled dataset $\mathcal{X}$ in a metric space, the task is to compute a smallest subset $S \subseteq \mathcal{X}$ such that every point in $\mathcal{X}$ shares its label with at least one of its nearest neighbors in $S$.

    Recently, the \mcs problem has been extended to \emph{graph metrics}, where distances are defined by shortest paths. Prior work has shown that \mcs remains NP-hard even on simple graph classes like trees, though an algorithm with runtime $\bigoh(2^{6c} \cdot n^6)$ is known for trees, where $c$ is the number of colors and $n$ the number of vertices. This raises the challenge of identifying graph classes that admit algorithms efficient in both $n$ and $c$.

    In this work, we study the Minimum Consistent Subset problem on graphs, focusing on two well-established measures: the vertex cover number ($vc$) and the neighborhood diversity ($nd$). Specifically, we design efficient algorithms for graphs exhibiting small $vc$ or small $nd$, which frequently arise in real-world domains characterized by local sparsity or repetitive structure. These parameters are particularly relevant because they capture structural properties that often correlate with the tractability of otherwise hard problems. Graphs with small vertex cover sizes are "almost independent sets", representing sparse interactions, while graphs with small neighborhood diversity exhibit a high degree of symmetry and regularity. Importantly, small neighborhood diversity can occur even in dense graphs, a property frequently observed in domains such as social networks with modular communities or knowledge graphs with repeated relational patterns. Thus, algorithms designed to work efficiently for graphs with small neighborhood diversity are capable of efficiently solving \mcs in complex settings where small vertex covers may not exist.

    We develop an algorithm with running time $vc^{\bigoh(vc)}\cdot\text{Poly}(n,c)$, and another algorithm with runtime $nd^{\bigoh(nd)}\cdot\text{Poly}(n,c)$. In the language of parameterized complexity, this implies that \mcs is fixed-parameter tractable (\fpt) parameterized by the vertex cover number and the neighborhood diversity. Notably, our algorithms remain efficient for arbitrarily many colors, as their complexity is polynomially dependent on the number of colors.
  \end{abstract}

  \pagebreak

  \section{Introduction}\label{sec:intro}
Clustering lies at the heart of numerous tasks in computer science and machine learning. In its essence, given a set of points in a metric space $(P,d)$, the objective is to partition them such that ``proximate'' points reside within the same cluster. While various unsupervised approaches exist, supervised learning offers a powerful paradigm for achieving ``most appropriate'' clustering.\smallskip

In supervised clustering, a labeled training dataset i.e., a subset of points $P' \subset P$ endowed with a coloring function $\colorof: P' \rightarrow [c]$ (where each color denotes a class/cluster) is provided to distill underlying patterns. Usually, given the training dataset, a learning algorithm outputs a set of cluster centers $C = \{c_1, \dots, c_r\}$. Subsequently, an unlabeled point $q$ is assigned the color $\colorof(c_i)$ where $c_i = \NN(q, C)$, with $\NN(q, C)$ representing the nearest neighbors of $q$ in $C$.\smallskip

However, the ever-increasing volume of modern datasets poses significant computational challenges for learning algorithms. Large datasets, while information-rich, often lead to protracted learning times. This has motivated a rich line of work on \emph{instance selection}, where the goal is to extract a small, yet representative, subset of the training data that preserves classification behavior. A classical formulation of this idea is the \emph{Minimum Consistent Subset} (\mcs) problem, introduced in 1968~\cite{hart1968}. Given a colored training dataset $T$, the \mcs problem seeks a minimum cardinality subset $S \subseteq T$ such that for every point $t \in T$, the color of $t$ is same as the color of at least one of its nearest neighbors in $S$. Despite its apparent simplicity, the \mcs problem poses significant computational hurdles and is known to be computationally hard in Euclidean spaces \cite{Wilfong, Khodamoradi}, and also hard to approximate in general settings~\cite{Chitnis22}.\smallskip

The \mcs problem has recently been extended to \emph{graph metrics}, motivated by applications where similarity is naturally modeled by graphs, such as social or knowledge networks. In the \emph{Consistent Subset Problem on Graphs} (\cspg), we are given a graph $G = (V, E)$ with a vertex coloring $\colorof: V \rightarrow [c]$. The distance metric is defined as the shortest path distance in $G$, denoted by $d(u, v)$. For a vertex $v \in V$ and a subset $U \subseteq V$, let $d(v, U) = \min_{u \in U} d(v, u)$. The set of nearest neighbors of $v$ in $U$ is denoted by $\NN(v, U) = \{u \in U : d(v, u) = d(v, U)\}$.\smallskip

A subset of vertices $S \subseteq V(G)$ is called a \emph{consistent subset} for $(G, \colorof)$ if, for every vertex $v \in V(G)$, the color of $v$ is present among the colors of its nearest neighbors in $S$, i.e.,  $\colorof(v) \in \colorof(\NN(v, S))$.

\begin{tcolorbox}[mydefstyle={\sc{Consistent Subset Problem on Graphs} (\cspg)}]
    \textbf{Input:} A graph $G$ and a coloring function $\colorof:V(G)\rightarrow [c]$.\\
    \textbf{Question:} Compute a minimum consistent subset for $(G,\colorof)$.
\end{tcolorbox}

This graph-theoretic version of the \mcs problem (i.e., \cspg) has recently drawn attention for both its theoretical appeal and practical relevance. Polynomial-time algorithms have been discovered for certain special graph classes, such as paths, spiders, and caterpillars~\cite{DeyMN23}, and later for bi-colored trees \cite{DeyMN21} and for $k$-colored trees (for fixed $k$) \cite{arimura2023minimum}. For more related works, we refer to \cite{abs-2405-14493, abs-2405-18569, BiniazK24}. Although results were known for bi-colored and $k$-colored trees, the status for the problem when the underlying graph is a general tree was open for a long time. In a recent breakthrough,~\cite{banik2024minimum} the authors systematically investigated \cspg and resolved this question. Their work led to two major contributions:

\begin{itemize}
    \item They established that \cspg is \npc on trees, resolving a key open question. This result is particularly striking, as many hard graph problems become tractable when restricted to trees.
    \item They designed a fixed-parameter tractable (\fpt) algorithm, i.e., an algorithm running in time $f(k) \cdot n^{\bigoh(1)} $, where $ k $ is a chosen parameter and $f$ is a computable function independent of the input size, for trees, with a running time of $\bigoh(2^{6c} \cdot n^6)$, where $c$ is the number of colors (chosen as the parameter) and $n$ is the number of vertices. This significantly improves upon earlier brute-force approaches with super-exponential dependence on $c$.
\end{itemize}

The hardness of \cspg on trees, where the minimum feedback vertex set (FVS) is empty set, has significant implications: it precludes the existence of an \fpt algorithm parameterized solely by FVS. This calls for stronger structural parameters to recover tractability. In this work, we take this challenge head-on and present FPT algorithms for \mcs parameterized by two natural and well-studied graph parameters:

\begin{itemize}
    \item \textbf{Vertex Cover Number ($vc$)}, which measures the minimum number of vertices needed to cover all edges of the graph.
    \item \textbf{Neighborhood Diversity ($nd$)}, which bounds the number of types of neighborhoods across the graph and is strictly stronger than vertex cover in dense graphs.
\end{itemize}

A formal definition for both parameters is presented in the next section. Our key contribution is that our algorithms are independent of the number of colors $c$, in stark contrast to prior work where the exponential dependence on $c$ was unavoidable. Specifically, we show:

\begin{itemize}
    \item \mcs admits an \fpt algorithm parameterized by vertex cover size, with running time $k^{\bigoh(k)} \cdot \text{poly}(n,c)$, where $f$ is color-independent and $k$ is the size of the vertex cover.
    \item \mcs also admits an \fpt algorithm parameterized by neighborhood diversity, again avoiding any exponential dependence on the number of colors.
\end{itemize}

In particular, we want to bring to the reader's attention that while designing an \fpt algorithm with dependence on both neighborhood diversity ($r$) and the number of colors ($c$) is straightforward, due to Claim~\ref{claim:none-one-all-vertices-from-a-type-color}, removing the dependence on the number of colors is highly non-trivial. This is because when the number of colors is large, the number of possible combinations becomes prohibitively high, resulting in a running time that is no longer \fpt in $r$. However, our key insight is that the interaction of a small number of important or \emph{responsible} colors with the solution is sufficient to determine the interaction of all other colors. While we may not be able to explicitly identify these responsible colors in advance, once we know how they interact with the solution, we can use a color-coding technique to probabilistically isolate and identify a most suitable set of such colors. This allows us to reduce the problem to a collection of independent subproblems, each of which can be solved separately using a greedy algorithm. The solutions to these subproblems can then be combined to obtain a solution to the original instance. To achieve this, we exploit structural properties arising from both the neighborhood diversity of the graph and the specific characteristics of the problem.\smallskip

At a high level, our algorithmic technique departs from the conventional use of color coding. Typically, color coding is employed to mark objects or structural features of a problem instance in a way that enables their independent resolution. In contrast, our approach involves color coding the elements themselves (in our case, the colors), with the goal of ensuring that a \emph{greedily selected} subset of solution elements remains well separated under the resulting color distribution. We believe that this perspective introduces a novel and potentially widely applicable direction for color coding, with possible extensions to a broader class of combinatorial problems beyond the specific context addressed in this work.

The parameter neighborhood diversity ($nd$) is particularly relevant in the context of AI and machine learning applications on graphs. While vertex cover captures a notion of "sparseness" around edges, neighborhood diversity provides a finer-grained measure of structural regularity. Graphs with small neighborhood diversity are those where most vertices have neighborhoods that are structurally similar, even if the graph is dense. Such structures appear in various real-world networks, including social networks with distinct community structures, or knowledge graphs where entities often share common relational patterns. An \fpt algorithm parameterized by $nd$ is significant because it indicates tractability not just for sparse graphs (like those with small vertex cover), but also for certain types of dense graphs that exhibit high regularity in their local connectivity patterns, a characteristic often observed in complex systems modeled as graphs in AI. This allows for efficient solutions in scenarios where a small vertex cover might not exist, but the underlying structure still permits algorithmic leverage.

  \section{Notations and Definitions}\label{sec:notations}

\noindent\textbf{Graph Notations and Definitions:} 
Let $G$ be a graph. We use $V(G)$ and $E(G)$ to denote the set of vertices and edges of $G$, respectively. For a set of vertices $S$, by $G \setminus S$ we mean $G[V(G) \setminus S]$, i.e., the subgraph of $G$ induced on $V \setminus S$. For a vertex $v$, $N(v)$ denotes the set of neighbors of $v$ in $G$   and $N[v] = N(v) \cup \{v\}$ denotes the closed neighborhood of $v$. We call a graph $G$ a {\em complete} graph if every pair of vertices in $G$ is adjacent. A {\em clique} in $G$ is an induced subgraph that is complete. In contrast, a set $I \subseteq V(G)$ is an {\em independent set} if no two vertices in $I$ are adjacent in $G$. A set $M \subseteq V(G)$ is a {\em vertex cover} if for every edge in $G$, at least one of its endpoints lies in $M$.\smallskip

Two vertices $u$ and $v$ are of the \emph{same type} if $N(v) \setminus \{u\} = N(u) \setminus \{v\}$. Note that, this defines an equivalence relation on $V(G)$~\cite{matsumoto2024space}. A \emph{neighborhood decomposition} of a graph $G$ is a partition $\mathcal{C} = \{C_1, C_2, \ldots, C_w\}$ of $V(G)$ such that all vertices in each $C_i$ are of same type. Each $C_i$ is a \emph{neighborhood class}, and $w$ is the size of the decomposition. The \emph{neighborhood diversity}, $\nd(G)$, is the minimum size of a neighborhood decomposition of $G$.

\begin{observation}
    Given a graph $G$, $\nd(G)$ can be computed in polynomial time~\cite{Lampis12}.
\end{observation}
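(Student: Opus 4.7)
The plan is to show that the equivalence classes of the ``same type'' relation are exactly the blocks of an optimal neighborhood decomposition, and then observe that these classes can be enumerated in polynomial time. Since the excerpt already points out that being of the same type is an equivalence relation, let $\mathcal{P} = \{P_1, \ldots, P_t\}$ denote its equivalence classes. By definition, every two vertices inside the same $P_i$ have the same type, so $\mathcal{P}$ is itself a valid neighborhood decomposition; hence $\nd(G) \le t$. Conversely, I would argue that for any neighborhood decomposition $\mathcal{C} = \{C_1, \ldots, C_w\}$, each $C_j$ must be contained in some $P_i$: indeed, any two vertices in $C_j$ are of the same type by the definition of a neighborhood decomposition, and therefore belong to the same equivalence class of the ``same type'' relation. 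This means $\mathcal{C}$ refines $\mathcal{P}$, so $w \ge t$, giving $\nd(G) = t$.

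To actually compute $t$ in polynomial time, I would enumerate all pairs $(u,v) \in V(G) \times V(G)$ and test whether $N(v) \setminus \{u\} = N(u) \setminus \{v\}$. Each such test is an $O(n)$ set comparison on adjacency information, so all tests together take $O(n^3)$ time. The pairs that pass the test can be fed into a union-find structure (or a direct BFS over the ``same type'' relation graph) to extract the equivalence classes, after which $t = |\mathcal{P}|$ is read off directly. The resulting total runtime is clearly polynomial in $n$, which is all that the observation requires.

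I do not expect a genuine obstacle here: the only thing that could trip one up is the cosmetic subtlety that the definition of ``same type'' deletes $u$ from $N(v)$ and $v$ from $N(u)$, so the relation lumps together both \emph{true twins} (adjacent vertices with $N[u]=N[v]$) and \emph{false twins} (non-adjacent vertices with $N(u)=N(v)$); the $\setminus\{u\}$ and $\setminus\{v\}$ corrections are precisely what make the relation transitive and force each class to be either a clique or an independent set. Once this is handled correctly in the equality test, the argument is routine, and for completeness I would cite \cite{Lampis12}, which gives the standard linear-time procedure using modular-decomposition-style machinery and subsumes the crude $O(n^3)$ bound above.
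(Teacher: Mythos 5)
Your proposal is correct. Note that the paper itself offers no proof of this observation at all --- it is stated purely with a citation to \cite{Lampis12} --- so there is no ``paper proof'' to diverge from; what you have written is the standard argument that the citation stands in for. Your two halves are both sound: since the ``same type'' relation is an equivalence relation, its classes form a valid neighborhood decomposition (so $\nd(G) \le t$), and every valid decomposition refines that partition because each block is a same-type set and hence lies inside a single equivalence class (so $\nd(G) \ge t$); computing the classes by $O(n^2)$ pairwise tests at $O(n)$ each, followed by union-find, is clearly polynomial. Your closing remark about the $\setminus\{u\}$, $\setminus\{v\}$ corrections is also the right thing to flag --- transitivity of the relation (which the paper asserts with its own citation) is the only nontrivial ingredient, and without those corrections the relation would separate true twins from false twins and fail to be transitive. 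The only quibble is cosmetic: ``$\mathcal{C}$ refines $\mathcal{P}$, so $w \ge t$'' deserves the one-line justification that a refinement of a partition has at least as many blocks, but that is immediate. Nothing further is needed.
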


We define the set of vertices at distance $\ell$ from a vertex $v$ by $\DistSet{v}{\ell} = \{ u \in V : d(u, v) = \ell \}$ and the set of vertices at distance $\ell$ from a vertex $v$ of color $a$ by $\DistSeta{v}{\ell}{a} = \{ u \in V : d(u, v) = \ell \text{ and } \colorof(u) = a \}$. For any vertex $v$, let $N_a(v)$ denotes the set of neighbors of $v$, with color $a$. For $X \subseteq V(G)$, we define $N(X)$ to be the neighbors of vertices in $X$. Most of the symbols and notations of graph theory used are standard and taken from \cite{Diestel12}.\smallskip

\noindent\textbf{Parameterized Complexity:} Parameterized complexity offers a framework for solving \np-hard problems more efficiently by isolating the combinatorial explosion to a parameter that is small in practice. A problem is fixed-parameter tractable (\fpt) if it can be solved in time $f(\ell) \cdot |I|^{\bigoh(1)}$, where $\ell$ is the parameter, $|I|$ is the input size, and $f$ is a computable function. Safe reduction rules are polynomial-time preprocessing steps that simplify the instance without changing its answer. For a detailed background, readers can refer to~\cite{CyganFKLMPPS15}.\smallskip

\noindent\textbf{Hitting Set:}
Given a set system $(\U,\F)$,  we say that $\Hset \subseteq \U$ is a {\em hitting set} for $(\U, \F)$ if $~\forall F \in \F,  ~\Hset \cap F \neq \emptyset$ and a set of subsets $\F' \subseteq \F$ is called a {\em set cover} for $(\U, \F)$ if $\bigcup_{F \in \F'} F = \U$. From \cite{CyganFKLMPPS15}[Theorem 6.1], we have the following proposition.

\begin{proposition}\label{prop:set-cover-runtime} Given a hitting set instance $\HS(\U, \F)$, a hitting set of minimum size can be found in time $2^{|\F|}(|\U| + |\F|)^{\bigoh(1)}$.
\end{proposition}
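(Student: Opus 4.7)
The plan is to reduce minimum hitting set to minimum set cover on a universe of size $|\F|$ and then invoke the classical subset-based dynamic program. Specifically, given an instance $\HS(\U,\F)$, I would dualize by assigning to every element $u \in \U$ the sub-family $S_u = \{F \in \F : u \in F\}$ of all sets that $u$ hits. The key observation is that a subset $H \subseteq \U$ is a hitting set for $(\U,\F)$ if and only if $\{S_u : u \in H\}$ is a set cover of $\F$. Hence minimum hitting sets correspond in size to minimum set covers of the dual instance, whose universe is precisely $\F$.

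Next I would run the standard Bellman-style dynamic program for set cover on this dualized instance. For each sub-family $T \subseteq \F$, define $f(T)$ to be the minimum number of dual-sets $S_u$ whose union contains $T$. The base case is $f(\emptyset) = 0$, and the recurrence is
\[
f(T) \;=\; 1 \;+\; \min_{u \in \U}\, f(T \setminus S_u)
\]
for $T \neq \emptyset$. Filling the table in order of increasing $|T|$ (equivalently, in bitmask order), the optimum value is $f(\F)$, and an explicit minimum hitting set is recovered by standard back-pointer tracing through the recurrence.

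For the running-time accounting, the table has $2^{|\F|}$ entries; a single entry is filled by scanning at most $|\U|$ candidate elements, each requiring polynomial work to compute $T \setminus S_u$ (as a bitmask) and to look up one previously-filled cell. The total cost is therefore $2^{|\F|}(|\U|+|\F|)^{\bigoh(1)}$, as claimed.

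Since this statement is being invoked as a standard textbook tool, there is no major obstacle; the main points needing care are (i) verifying the equivalence between the primal hitting-set instance and its dualized set-cover incarnation, and (ii) confirming that the DP is well-posed, i.e., $f(T) = +\infty$ exactly when some $F \in T$ is empty (the only case of infeasibility). Both are immediate, so the bound follows directly from the reduction and the DP analysis.
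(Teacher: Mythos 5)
Your proof is correct and matches what the paper intends: the paper simply cites Theorem 6.1 of Cygan et al.\ (the subset dynamic program for \textsc{Set Cover}), and the $2^{|\F|}$ factor in the stated bound already presupposes exactly the dualization you carry out, namely turning the hitting-set instance $(\U,\F)$ into a set-cover instance with universe $\F$ and dual sets $S_u=\{F\in\F: u\in F\}$. Your recurrence, feasibility remark, and running-time accounting are all standard and accurate, so nothing further is needed.
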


The $\bigohstar$ notation suppresses polynomial factors in the input size. Specifically, $\bigohstar(f(n)) = \bigoh(f(n) \cdot \text{poly}(n))$, where polynomial factors are omitted for clarity when they are not the focus of the analysis.

  \section{\fpt Algorithm Parameterized by Vertex Cover Size}

In this section, we present a fixed-parameter tractable (\fpt) algorithm for the \mcs problem parameterized by the size of the vertex cover. For completeness, we begin with a formal definition of the problem.

\begin{tcolorbox}[mydefstyle, title={\sc Consistent Subset Problem Parameterized by Vertex Cover Size}]
  \textbf{Input:} A graph $G = (V = M \sqcup I, E)$ where $|M| = k$ and $G[I]$ is an independent set, along with a coloring function $\colorof: V(G) \rightarrow [c]$.\\
  \textbf{Question:} Compute a minimum consistent subset (\mcs) $S$ for $(G, \colorof)$.\\
  \textbf{Parameter:} $k$
\end{tcolorbox}

It is well-known that the \textsc{Vertex Cover} problem is \fpt when parameterized by the solution size $k$~\cite{CyganFKLMPPS15}. Let $k$ be the size of the minimum vertex cover. As a preprocessing step, we compute a vertex cover $M$ of size $k$. We define $I = V(G) \setminus M$. Observe that the induced subgraph $G[I]$ is edgeless.

\begin{observation}\label{obs:diam-bounded-vc}
  For any two vertices $u$ and $v$ in $G$, $0 \leq d(u, v) \leq 2k$. In particular, if at least one of $u, v \in M$, then $0 \leq d(u, v) \leq (2k - 1)$.\label{obs:bound_path}
\end{observation}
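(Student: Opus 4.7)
The plan is to prove both bounds by a direct pigeonhole argument on a shortest $u$–$v$ path, leveraging the fact that $M$ is a vertex cover and $I$ is therefore an independent set.

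First, I would fix a shortest path $u = w_0, w_1, \ldots, w_\ell = v$ in $G$, where $\ell = d(u,v)$. Because $M$ is a vertex cover, every edge $\{w_i, w_{i+1}\}$ on this path has at least one endpoint in $M$; equivalently, no two consecutive vertices of the path both lie in $I$. Hence the set of indices $i$ with $w_i \in I$ forms an independent set in the path graph on $w_0, \ldots, w_\ell$, so it has cardinality at most $\lceil (\ell+1)/2 \rceil$. Since the path is a shortest path, its vertices are all distinct, so the number of $M$-vertices on the path is at least $(\ell+1) - \lceil (\ell+1)/2 \rceil = \lfloor (\ell+1)/2 \rfloor$, and it is bounded above by $|M| = k$. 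Rearranging gives $\ell \le 2k$, which is the first claim.

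For the refinement, suppose without loss of generality that $u \in M$, i.e., $w_0 \in M$. Then the possible $I$-positions on the path are restricted to $\{w_1, \ldots, w_\ell\}$, still with the constraint that no two are consecutive, so there are at most $\lceil \ell/2 \rceil$ of them. This gives at least $(\ell+1) - \lceil \ell/2 \rceil = \lfloor \ell/2 \rfloor + 1$ distinct $M$-vertices on the path, which must be at most $k$, yielding $\ell \le 2k - 1$.

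There is no real obstacle here: the argument is essentially a pigeonhole on the alternation structure forced by the vertex cover property, together with the fact that a shortest path visits each vertex at most once. The only care needed is to verify the off-by-one in the second bound by explicitly accounting for the fact that an $M$-endpoint consumes a path position that could otherwise have hosted an $I$-vertex.
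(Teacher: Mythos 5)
Your proof is correct and uses essentially the same argument as the paper: a pigeonhole count on a shortest path exploiting that no two consecutive path vertices lie in $I$, so the at most $k$ vertices of $M$ bound the total path length. The only cosmetic difference is that you bound the number of $I$-vertices via the independent-set size of a path and then lower-bound the $M$-count, while the paper directly says at most $k+1$ (resp.\ $k$) $I$-vertices can be interleaved among at most $k$ $M$-vertices; the floor/ceiling bookkeeping in both versions checks out.
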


\begin{proof}
 Let $P$ be a shortest path between vertices $u$ and $v$. Since $I$ is an independent set, no two consecutive vertices on $P$ can belong to $I$. Thus, between any two vertices from $I$, there must be at least one vertex from $M$.

 The path $P$ can contain at most $k$ vertices from $M$, as $|M| = k$. Therefore, the number of vertices from $I$ on $P$ is at most $k + 1$. This gives an upper bound on the total number of vertices in $P$ as $k + (k + 1) = 2k + 1$.

 In the case where either $u \in M$ or $v \in M$, the number of vertices from $I$ on $P$ can be at most $k$, and thus the total number of vertices in $P$ is at most $2k$. Therefore, the observation follows.
\end{proof}

Next, we make two guesses with respect to a minimum consistent subset $S$ and attempt to find a solution that respects the guesses.

\begin{description}
    \item[Guess 1:] We guess the distances from each vertex $u_{i}$ in $M$ to $S$. More specifically, we assume that an array $\D=[d_1,d_2,\cdots d_k]$ is given where $d_i$ denote the distance between $u_i$ and $S$. By Observation \ref{obs:bound_path}, each entry $d_i$ can take a value between $0$ and $(2k - 1)$. Thus, the total number of guesses for $\D$ is bounded by $(2k)^k$.
    \item[Guess 2:] 
    We guess the set of vertices $ M_1 \subseteq M $, which consists of the neighbors of the vertices $ S \cap I$. Formally, $M_1=\{u~|\text{ }u\in N(S \cap I)\setminus (S\cap M) \}$. The number of choices is bounded by $2^k$.
\end{description}

Let $I^{\textsc{out}}(\D)$ be the set of vertices in $I$  that are at a distance at most $d_i-1$ from some vertex $u_i\in M$. For any choice of $(\D,M_1)$, we say that a set of vertices $X\subseteq V(G)$ respects the choice $(\D, M_1)$, if $\forall u_{i} \in M$, $d(u_{i},X)=d_{i}$ and $N(X\cap I)=M_1$.
Therefore, given $(\D,M_1)$, our aim is to find a minimum cardinality consistent subset $S \subseteq V(G)$ that respects the choice $(\D,M_1)$.

\begin{observation}\label{obs:No-red-vertices}
  For any minimal consistent subset $S$ respecting $(\D,M_1)$, $S\cap I^{\textsc{out}}(\D)=\emptyset$.
\end{observation}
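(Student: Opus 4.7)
The statement is essentially a direct consequence of how $I^{\textsc{out}}(\D)$ is defined together with the definition of ``respecting $(\D, M_1)$,'' so my plan is a short proof by contradiction rather than a structural argument invoking minimality.

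The plan is to suppose, for contradiction, that some vertex $v \in S \cap I^{\textsc{out}}(\D)$ exists. By the definition of $I^{\textsc{out}}(\D)$, there must exist some $u_i \in M$ with $d(u_i, v) \leq d_i - 1$. Since $v \in S$, this immediately yields $d(u_i, S) \leq d(u_i, v) \leq d_i - 1$. On the other hand, the hypothesis that $S$ respects the guess $(\D, M_1)$ demands $d(u_i, S) = d_i$ for every $u_i \in M$. These two bounds are incompatible, producing the desired contradiction.

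The only point to be careful about is distinguishing what is actually needed: although the statement is phrased for a \emph{minimal} consistent subset, the argument above never invokes minimality or even consistency; it uses only the distance constraint baked into the definition of ``respecting $\D$.'' I expect this is a deliberately conservative phrasing by the authors (they will only ever apply the observation to minimal solutions anyway), and I would flag this in the write-up but otherwise proceed with the streamlined two-line proof. No step here should pose any real obstacle beyond unpacking the definitions of $I^{\textsc{out}}(\D)$ and of the relation ``$S$ respects $(\D, M_1)$.''
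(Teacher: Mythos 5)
Your proof is correct and is essentially identical to the paper's own argument: both derive $d(u_i,S)\leq d_i-1$ from $v\in S\cap I^{\textsc{out}}(\D)$ and contradict the requirement $d(u_i,S)=d_i$ imposed by respecting $\D$. Your side remark is also accurate — the paper's proof likewise never uses minimality or consistency, only the distance constraint.
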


\begin{proof}
 Assume, for the sake of contradiction, that there exists a vertex $v \in S \cap I^{\textsc{out}}(\D)$. By the definition of $I^{\textsc{out}}(\D)$, there exists a vertex $u_i \in M$ such that $$d(u_i, v) \leq d_i - 1,$$ where $d_i = d(u_i, S)$ by definition. Since $v \in S$, it follows that $$d(u_i, S) \leq d(u_i, v) = d_i - 1,$$ which is a contradiction to $d_i = d(u_i, S)$ as defined in $\D$. Therefore, our assumption is false, and hence we conclude that $S \cap I^{\textsc{out}}(\D) = \emptyset.$
\end{proof}

We define $M_0=S\cap M$, i.e. $M_0=\{u_{i} \in M \text{ } | \text{ } d_{i}=0\}$ and $M_{x}= M \setminus (M_0 \cup M_1)$. Recall, for any vertex $v$, we denote the set of vertices at distance $d$ from $v$ by $\DistSet{v}{d}$, the set of vertices of color $a$ in the neighbor of $v$ by $N_a(v)$ and the set of vertices of color $a$ at distance $d$ from $v$ by $\DistSeta{v}{d}{a}$.\smallskip

We extend the scope of $\D$ and define $d_i$ for the vertices $u_{i}$ in $I$ as follows. Let $d_{u_{i}}^{min}$ be the minimum distance in $\D$ among the set of vertices $N(u_{i})$, i.e. $d_{u_{i}}^{min}=\min_{u_j\in N(u_{i})} d_j$. Note that all the neighbors of $u_i$ are in $M$ and hence $d_i=d_{u_{i}}^{min}+1$ is well defined. For any vertex $u_i \in I$, we define $C_i$ to be the set of colors of all those vertices that are at distance $d_{i}$ from $u_i$ and do not belong to the set $(M_{x} \cup M_{1})$, i.e. $C_{i}=\{\colorof(u_{j}) ~|~ u_j\in \DistSet{u_{i}}{d_i}\setminus (M_{x}\cup M_{1}\cup I^{\textsc{out}}(\D))\}$. Let $I^{\textsc{in}}\subseteq I$ be the set of vertices $u_i$ such that $\colorof(u_i)\notin C_{i}.$

\begin{observation}\label{obs:include-inconsistent-vertex}
    For any consistent subset $S$ respecting $(\D,M_1)$, $I^{\textsc{in}}(\D)\subseteq S$.
    \label{obs:mustinclude}
\end{observation}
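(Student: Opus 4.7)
The plan is to argue by contradiction: assume there exists $u_i \in I^{\textsc{in}}(\D)$ with $u_i \notin S$, and then exhibit a vertex of $S$ whose color lies in $C_i$ and equals $\colorof(u_i)$, contradicting $u_i \in I^{\textsc{in}}(\D)$.

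First I would pin down the exact distance from $u_i$ to $S$. Since $u_i \in I$ and $G[I]$ is edgeless, every neighbor of $u_i$ lies in $M$, so a shortest path from $u_i$ to any element of $S \setminus \{u_i\}$ must first traverse some $u_j \in N(u_i) \subseteq M$. Combining this with the guessed values $d(u_j, S) = d_j$ shows that, whenever $u_i \notin S$,
\[
 d(u_i, S) \;=\; 1 + \min_{u_j \in N(u_i)} d_j \;=\; d_{u_i}^{\min} + 1 \;=\; d_i,
\]
which justifies the extension of $\D$ to $I$ and fixes the distance at which nearest neighbors of $u_i$ in $S$ must sit.

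Next, because $S$ is a consistent subset, some $v \in \NN(u_i, S)$ satisfies $\colorof(v) = \colorof(u_i)$; in particular $v \in S$ and $d(u_i, v) = d_i$, so $v \in \DistSet{u_i}{d_i}$. The key step is to show $v \notin M_x \cup M_1 \cup I^{\textsc{out}}(\D)$, which will force $\colorof(v) \in C_i$:
\begin{itemize}
  \item Observation~\ref{obs:No-red-vertices} applied to $S$ rules out $v \in I^{\textsc{out}}(\D)$.
  \item If $v \in M$, then since $v \in S \cap M$ we have $d_v = 0$ and hence $v \in M_0$; by construction $M_0$ is disjoint from both $M_x$ and $M_1$.
  \item If $v \in I$, then $v \notin M$ and in particular $v$ is neither in $M_x$ nor in $M_1$.
\end{itemize}
In every case $v$ contributes its color to $C_i$, so $\colorof(u_i) = \colorof(v) \in C_i$, contradicting the defining property $\colorof(u_i) \notin C_i$ of $u_i \in I^{\textsc{in}}(\D)$.

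The main subtlety (rather than an obstacle) is just the careful bookkeeping that the definitions of $M_0, M_1, M_x$ and the extended $d_i$ interlock cleanly; once one observes that an element of $S$ cannot simultaneously sit inside any of the three ``excluded'' sets $M_x$, $M_1$, $I^{\textsc{out}}(\D)$, the argument reduces to the one-line consistency contradiction above.
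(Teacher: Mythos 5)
Your proof is correct, and it actually takes a different (and more complete) route than the paper's. The paper argues by taking the closest solution vertex $x$ with $\colorof(x)=\colorof(u_i)$, walking to the neighbor $u_j\in M$ of $u_i$ on that shortest path, and claiming $d(u_j,x)<d_{u_i}^{\min}\leq d_j$, contradicting Guess~1; notably, that argument never invokes the defining property $\colorof(u_i)\notin C_i$, and the strict inequality is not justified as written (consistency only forces $d(u_i,x)=d_i$, giving $d(u_j,x)=d_i-1=d_{u_i}^{\min}$, which does not contradict $d_j\geq d_{u_i}^{\min}$). You instead pin down $d(u_i,S)=d_i$ exactly, observe that every vertex of $S$ at distance $d_i$ from $u_i$ avoids $M_x\cup M_1\cup I^{\textsc{out}}(\D)$ (since $S\cap M=M_0$ is disjoint from $M_x$ and $M_1$, and $S\cap I^{\textsc{out}}(\D)=\emptyset$), and conclude that its color lands in $C_i$ — which is precisely what $u_i\in I^{\textsc{in}}(\D)$ forbids. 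This is the argument that actually uses the definition of $I^{\textsc{in}}(\D)$, and the bookkeeping you flag as the "main subtlety" is exactly the content the paper's proof glosses over. Two trivial caveats: your appeal to Observation~\ref{obs:No-red-vertices} is fine even though that statement says "minimal," since its proof uses only that $S$ respects $\D$; and the whole setup implicitly assumes $N(u_i)\neq\emptyset$ so that $d_{u_i}^{\min}$ is defined, an assumption the paper also makes.
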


\begin{proof}
   Suppose not. Let $u_i\in I^{\textsc{in}}(\D)$ but $u_i \notin S$.
   Also, let $x$ be the closest vertex in $S$ from $u_i$ such that $\colorof(x)=\colorof(u_i)$. 
   Consider $P$ as the shortest path between $u_i$ and $x$, also let $u_j \in M$ be the vertex next to $u_i$ in path $P$.
   Observe that, $d(u_j, x)< (d_{u_i}^{min}+1)-1=d_{u_i}^{min}\leq d_j$, which contradicts the assumption that $S$ respects the choice $\D$.
\end{proof}

\begin{observation}\label{obs:i-in-construction}
    Given $\D$, in polynomial time, we can find out the set of vertices in $I^{\textsc{in}}(\D)$ and $I^{\textsc{out}}(\D)$. 
\end{observation}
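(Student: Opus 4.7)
The plan is to show that both sets are computed via straightforward distance lookups and membership tests, each of which is polynomial in $n$ and $k$. As a preprocessing step I would compute the all-pairs shortest path distances in $G$ (e.g., by running BFS from every vertex), which takes $\bigoh(n(n+m))$ time and makes every distance $d(u,v)$ available in constant time thereafter.

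For $I^{\textsc{out}}(\D)$, its definition depends only on the guessed array $\D$: a vertex $v \in I$ belongs to $I^{\textsc{out}}(\D)$ iff there exists some $u_i \in M$ with $d(u_i, v) \leq d_i - 1$. I would iterate over every pair $(v, u_i) \in I \times M$ and perform this numerical comparison using the precomputed distance table; this costs $\bigoh(|I|\cdot k)$ comparisons and therefore runs in polynomial time.

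For $I^{\textsc{in}}(\D)$, the construction proceeds in a bottom-up fashion following the definitions preceding the statement. First, using $\D$ and (the already guessed) $M_1$, I would read off $M_0 = \{u_i \in M : d_i = 0\}$ and set $M_x = M \setminus (M_0 \cup M_1)$. Next, for each vertex $u_i \in I$ I would extend $\D$ by setting $d_i = d_{u_i}^{min} + 1$, where $d_{u_i}^{min} = \min_{u_j \in N(u_i)} d_j$; this requires only a scan over $N(u_i) \subseteq M$. Then, using the distance table, I would enumerate $\DistSet{u_i}{d_i}$, subtract the vertices lying in $M_x \cup M_1 \cup I^{\textsc{out}}(\D)$ (which is already computed), and collect the colors of the remaining vertices to form $C_i$. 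Finally, $u_i$ is placed in $I^{\textsc{in}}(\D)$ iff $\colorof(u_i) \notin C_i$. Each of these per-vertex operations scans at most $n$ vertices, so the whole procedure is polynomial.

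I do not expect a genuine obstacle here since the observation is essentially a bookkeeping claim: the definitions of $I^{\textsc{out}}(\D)$ and $I^{\textsc{in}}(\D)$ are given by explicit conditions involving shortest-path distances, color comparisons, and set membership, all of which are polynomial-time checkable once distances are precomputed. The only mild subtlety is making sure that $I^{\textsc{out}}(\D)$ is computed \emph{before} $I^{\textsc{in}}(\D)$, since the latter refers to the former; following that ordering makes the algorithm immediate.
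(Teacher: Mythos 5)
Your proof is correct and follows essentially the same approach as the paper, which simply notes that both sets can be constructed in polynomial time via shortest-path computations; you have merely spelled out the bookkeeping (all-pairs shortest paths, then direct membership tests from the definitions, computing $I^{\textsc{out}}(\D)$ before $I^{\textsc{in}}(\D)$) in more detail.
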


\begin{proof}
 For a given choice of $\D$, both $I^{\textsc{out}}(\D)$ and $I^{\textsc{in}}(\D)$  can be constructed in polynomial time using shortest path algorithms.   
\end{proof}

We have established that for any consistent subset $S$ respecting $\D$, $I^{\textsc{in}}(\D)\subseteq S$ and $I^{\textsc{out}}(\D)\cap S=\emptyset$. If $I^{\textsc{in}}(\D) \cap I^{\textsc{out}}(\D) \neq \emptyset$, then we simply discard the guess $\D$.\smallskip

We denote a vertex $u_i$ to be {\it satisfied} if $\exists$ a vertex $v \in \DistSet{u_i}{d_i}\cap (M_{0} \cup I^{\textsc{in}}(\D))$ such that $\colorof(u_{i})=\colorof(v)$. If a vertex is not satisfied, we call it {\it unsatisfied} and let ${\overline{U}}$ denote the set of all unsatisfied vertices. For any color $a$, let ${\overline{U}}_a \subseteq {\overline{U}}$ denote the subset of vertices in ${\overline{U}}$ that are colored $a$.

Let $S$ be any solution that respects $(\D,M_1)$. For each color $a$, define $S_a \subseteq S \setminus \left(I^{\textsc{in}}(\D) \cup M_0\right)$ to be the set of vertices in $S$ of color $a$, excluding those in $I^{\textsc{in}}(\D)$ and $M_0$. Let $S_a' \subseteq I\setminus I^{\textsc{out}}(\D) $ be any set of vertices of color $a$ such that for every vertex $u_i \in {\overline{U}}_a$, $d(u_i,S_a')\leq d_i$.
        
\begin{lemma}
 The set $S' = (S \setminus S_a) \cup S_a'$ is a consistent subset respecting $(\D,M_1)$, when $S$ is consistent with $(\D,M_1)$.\label{lemma:exchangeandind}
\end{lemma}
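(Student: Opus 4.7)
My plan is to verify the two defining properties of being a ``consistent subset respecting $(\D, M_1)$'' separately, using the fact that the swap $S \mapsto S'$ takes place entirely inside the independent set $I$. Since $S_a \subseteq S \setminus (I^{\textsc{in}}(\D) \cup M_0) = (S \cap I) \setminus I^{\textsc{in}}(\D)$ and $S_a' \subseteq I$, the portion $S' \cap M = M_0 = S \cap M$ is unchanged, and furthermore $S_a$ is disjoint from both $M_0$ and $I^{\textsc{in}}(\D)$. These two observations will be used repeatedly.

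For the distance condition $d(u_i, S') = d_i$ at each $u_i \in M$, the lower bound comes for free: vertices of $S \setminus S_a$ are at distance at least $d_i$ from $u_i$ because $S$ respects $\D$, and vertices of $S_a'$ are at distance at least $d_i$ from $u_i$ because $S_a' \subseteq I \setminus I^{\textsc{out}}(\D)$. For the matching upper bound I case-split on $\colorof(u_i)$. If $\colorof(u_i) \neq a$, the consistency of $S$ supplies a same-color witness in $\DistSet{u_i}{d_i} \cap S$ that cannot lie in $S_a$ (wrong color) and so survives in $S'$. If $\colorof(u_i) = a$ and $u_i$ is satisfied, the witness in $\DistSet{u_i}{d_i} \cap (M_0 \cup I^{\textsc{in}}(\D))$ survives since $S_a$ is disjoint from $M_0 \cup I^{\textsc{in}}(\D)$. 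If $\colorof(u_i) = a$ and $u_i \in \overline{U}_a$, the hypothesis on $S_a'$ directly provides a vertex at distance at most $d_i$. Consistency of $S'$ at an arbitrary vertex $v$ follows by essentially the same case split: either $v \in M_0 \cup I^{\textsc{in}}(\D)$ witnesses itself, or $\colorof(v) \neq a$ (untouched $S$-witness), or $\colorof(v) = a$ and $v$ is either satisfied (witness in $M_0 \cup I^{\textsc{in}}(\D)$) or in $\overline{U}_a$ (where the lower distance bound $d(v, S') \geq d_v$ promotes the $S_a'$-witness at distance $\leq d_v$ to a genuine nearest neighbor).

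The main obstacle I expect is the careful bookkeeping around the ``extended'' definition $d_v = d_v^{\min} + 1$ for $v \in I$: one needs to confirm that $d(v, S')$ genuinely equals $d_v$ after the swap, so that a same-color vertex of $S_a'$ at distance at most $d_v$ really is a nearest neighbor rather than a strictly farther one. The clean route will be to trace the shortest path from $v$ through its nearest $M$-neighbor and combine it with the distance facts already established for $M$. I also expect the neighborhood condition $N(S' \cap I) = M_1$ --- a component of ``respecting $(\D, M_1)$'' not pinned down directly by the stated hypotheses on $S_a'$ --- to require invoking the algorithmic selection of $S_a'$, which must be chosen so that $N(S_a') \subseteq M_1$ and so that every vertex of $M_1$ that loses its only $S \cap I$-witness to $S_a$ is re-covered; I would make this selection principle explicit before completing the argument.
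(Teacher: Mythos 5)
Your plan is correct and follows essentially the same route as the paper's proof: the core of both arguments is that (i) every vertex of $S_a'$ lies outside $I^{\textsc{out}}(\D)$, so tracing a shortest path through the vertex's $M$-neighbor shows no vertex of $S'$ can come strictly closer than $d_i$, and (ii) a same-color witness at distance exactly $d_i$ survives the swap, supplied by the consistency of $S$ when $\colorof(u_i)\neq a$ or $u_i$ is satisfied, and by the defining property of $S_a'$ when $u_i\in\overline{U}_a$ (the paper phrases this as a contradiction, you phrase it directly, which is immaterial). Your closing observation is also well taken: the condition $N(S'\cap I)=M_1$ genuinely does not follow from the stated hypotheses on $S_a'$ alone, and the paper's own proof silently omits it; making the selection constraint $N(S_a')\subseteq M_1$ (and the re-covering of $M_1$) explicit, as you propose, is needed for the lemma to hold in the full sense of ``respecting $(\D,M_1)$.''
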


\begin{proof}
\noindent
 For the sake of contradiction, suppose that the set $S'$ is not consistent. Then, by the definition, there exists at least one vertex $u_i \in V(G)$ such that $\colorof(u_i) \notin \colorof(\NN(u_i, S'))$. Note that $u_i\in \overline{U} $. Now, if $u_i \notin \overline{U}_a$ i.e., $\colorof(u_i)=b$ (say). In this case, as $S$ is a consistent subset and by the construction of $S', \text{ } \exists $ a vertex $u_j \in S'$ of color $b$ such that $d(u_i, u_j)=d_i$. Hence, $u_i \in \overline {U}_a$ and by the definition of $S_a'$, there exists a vertex $u_j$ of color $\colorof(u_i)$ such that $d(u_i, u_j) \leq d_i$. Hence, we have $d_i' = d(u_i, S') < d_i$. 

 Observe that $u_i \in I$; otherwise, the fact that $d(u_i, S') < d_i$ would imply that $S'$ does not respect \emph{Guess 1}. Let $u_b$ be any vertex in $\NN(u_i, S')$. Let $u_c \in M$ be the neighbor of $u_i$ in $M$ lying on the path from $u_i$ to $u_b$. Then, $d(u_c, u_b) < d_i - 1$.

 We know that $d_i=d_{u_i}^{min}+1$ and $d_c\geq (d_{u_i}^{min}+1)-1=d_i-1> d(u_c, u_b)$. Hence, $S_a'$ contains a vertex at distance at most $d_c-1$. Thus $S_a'$ contains a vertex from $I^{\textsc{out}}(\D)$, which contradicts the definition of $S_a'$, completing the proof.
\end{proof}

Therefore, from Lemma~\ref{lemma:exchangeandind}, given $\D$ and $M_1$, for each color $a \in \colorof(\overline{U})$, our objective reduces to independently computing a minimum-size set $S_a^* \subseteq I \setminus I^{\textsc{out}}(\D)$ of color $a$, such that for every vertex $u_i \in \overline{U}_a$, it holds that $d(u_i, S_a^*) \leq d_i$.\smallskip

Recall, we define the set of vertices at distance $\ell$ from a vertex $v$ by $\DistSet{v}{\ell} = \{ u \in V : d(u, v) = \ell \}$.

For any vertex $u_i\in {\overline{U}_a}$, let $M_1^a(u_i)\subseteq N^{d_i-1}{(u_i)} \cap M_1$ to be the set of vertices such that each vertex in $M^a_1(u_i)$ has at least one neighbor of color $a$ in $I\setminus I^{\textsc{out}}(\D)$. Formally,
\begin{equation*}
  M_1^a(u_i)=\{u_j\in N^{d_i-1}{(u_i)}\cap M_1:N_a(u_j)\cap(I\setminus I^{\textsc{out}}(\D)) \neq \emptyset\}   
\end{equation*}

The intuition behind the definition of $M_1^a(u_i)$ is as follows. In order to satisfy any unsatisfied vertex $u_i$ of color $a$, any solution must include at least one vertex $u \in I\setminus I^{\textsc{out}}(\D)$ of color $a$ where $u$ is a neighbor of a vertex in $M_1^a(u_i)$. We define the following set system with ground set $M_1$, $\mathcal{M}_a(\D,M_1)=\{M_1^a(u_i)\}$.\smallskip

Let $X_a^*\subseteq N(S_a)\cap M_1$ be the \emph{minimal} set of vertices such that every vertex in $S_a$ has a neighbor in $X_a^*$.
\begin{observation}
    $X_a^*$ must be a minimal hitting set for $\mathcal{M}_a(\D,M_1)$.
\end{observation}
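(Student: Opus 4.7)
The plan is to verify the two defining properties of a minimal hitting set: first, that $X_a^*$ intersects every set $M_1^a(u_i)$ in $\mathcal{M}_a(\D, M_1)$; second, that no proper subset of $X_a^*$ retains this property.

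For the hitting property, I would fix $u_i \in \overline{U}_a$ and exhibit an element of $X_a^* \cap M_1^a(u_i)$. By the reduction, $d(u_i, S_a) \leq d_i$, so pick a witness $v \in S_a$ with $d(u_i, v) \leq d_i$. Combining the constraint $v \notin I^{\textsc{out}}(\D)$ (which forces $d(u_k, v) \geq d_k$ for every $u_k \in M$) with $d(u_i, v) \leq d_i$, a short distance-tracking argument along a shortest $u_i$-to-$v$ path pins down $d(u_i, v) = d_i$ exactly, and shows that the $M$-neighbor $w$ of $v$ on this path satisfies $d(u_i, w) = d_i - 1$ and $w \in M_1$. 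The alternative $w \in M_0$ would place $w \in S$ at distance $d_i - 1$ from $u_i$; this contradicts the guessed distance when $u_i \in M$, and for $u_i \in I$ one argues that a same-color $w \in M_0$ would either satisfy $u_i$ (contradicting $u_i \in \overline{U}_a$) or render $v$ unnecessary in $S_a$. Because $v \in N_a(w) \cap (I \setminus I^{\textsc{out}}(\D))$, we obtain $w \in M_1^a(u_i)$. Finally, since $X_a^*$ covers $v$, after a small exchange one arranges that the covering neighbor of $v$ is precisely this shortest-path vertex $w$, completing the hitting argument.

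For minimality, I would argue by contradiction: suppose some $w' \in X_a^*$ can be dropped while preserving the hitting property. The minimality of $X_a^*$ as a cover of $S_a$ produces a vertex $v' \in S_a$ whose only neighbor in $X_a^*$ is $w'$. Tracing back through the correspondence developed above, $v'$ is the unique witness covering some $u_i \in \overline{U}_a$ with $w' \in M_1^a(u_i)$, and hence $X_a^* \setminus \{w'\}$ fails to hit $M_1^a(u_i)$, a contradiction.

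The main obstacle I anticipate is the coordination step in the hitting argument: the minimal cover $X_a^*$ supplies some neighbor of $v$ in $M_1$, but $M_1^a(u_i)$ restricts to neighbors at distance exactly $d_i - 1$ from $u_i$. An arbitrary neighbor of $v$ could sit at distance $d_i$ or $d_i + 1$ from $u_i$ and fail to qualify. The resolution is to show that any non-shortest-path neighbor of $v$ in a minimal cover can be replaced by a shortest-path neighbor without increasing cardinality or violating coverage, so $X_a^*$ may be assumed to include the desired $w$ without loss of generality.
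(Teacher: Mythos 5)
The paper states this observation without proof, so I am judging your argument on its own terms. The heart of your hitting argument is correct and is exactly the content the observation needs: for each $u_i \in \overline{U}_a$ the color-$a$ witness $v \in S_a$ sits at distance exactly $d_i$ (your distance-tracking via $v \notin I^{\textsc{out}}(\D)$ is right), its predecessor $w$ on a shortest $u_i$--$v$ path lies in $M$ at distance $d_i-1$, $w \notin M_0$ since otherwise $d(u_i,S)<d_i$, hence $w \in M_1$, and, having $v$ as a color-$a$ neighbor outside $I^{\textsc{out}}(\D)$, we get $w \in M_1^a(u_i)$. (Your case split for ruling out $w \in M_0$ is more complicated than necessary: $w \in M_0 \subseteq S$ at distance $d_i-1$ already contradicts $d(u_i,S)=d_i$ whether $u_i \in M$ or $u_i \in I$; no appeal to colors or to the redundancy of $v$ is needed.)

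The two framing steps do not go through, however. First, the ``small exchange'': you correctly flag that a minimal cover of $S_a$ may select a neighbor of $v$ at distance $d_i$ or $d_i+1$ from $u_i$, but swapping it for the shortest-path predecessor is not a local fix. Two unsatisfied vertices $u_i, u_{i'}$ can share the same witness $v$ while demanding distinct predecessors $w \neq w'$ with $M_1^a(u_i)=\{w\}$ and $M_1^a(u_{i'})=\{w'\}$; then no singleton, and hence no minimal cover of $S_a=\{v\}$, hits both sets, so the ``without loss of generality'' cannot be realized within the class of minimal covers. The statement one can actually prove, and the one Guess~3 needs, is existential and drops the minimal-cover framing: the set $\{w(u_i) : u_i \in \overline{U}_a\}$ of shortest-path predecessors is contained in $N(S_a)\cap M_1$ and hits every set of $\mathcal{M}_a(\D,M_1)$. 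Second, your minimality argument conflates two different minimality notions: that dropping $w'$ uncovers some $v' \in S_a$ says nothing about whether $X_a^*\setminus\{w'\}$ still hits $\mathcal{M}_a(\D,M_1)$ --- a vertex of $S_a$ need not be the distance-$d_i$ witness of any unsatisfied vertex, and even when it is, $M_1^a(u_i)$ may contain other members of $X_a^*$, so the claimed unique-witness correspondence fails. Fortunately, minimality is immaterial to the algorithm (Guess~3 enumerates all $2^k$ subsets regardless), so the substantive part of your proof is the predecessor construction, and that part is essentially right.
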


Towards finding $S_a^*$, we make the following final guess:

\begin{description}
    \item[\textbf{Guess 3:}] For each color $a$, guess the minimal hitting set $X_a \subseteq N(S_a) \cap M_1$. The total number of such choices is bounded by $2^k$.
\end{description}

Given $X_a$, consider the set system $(I,\F(X_a))$ where for each vertex $u_i\in X_a$ we include set of vertices $N(u_i) \cap (I\setminus(I^{\textsc{in}}(\D) \cup I^{\textsc{out}}(\D)))$ of color $a$ as a subset in the family $\F(X_a)$ i.e. 
$$\F(X_a)=\{N(u_i) \cap (I\setminus (I^{\textsc{in}} \cup I^{\textsc{out}})) \cap \colorof^{-1}(a):u_i \in X_a\}$$

For any choice $X_a$, let $S(X_a)$ denotes the minimum hitting set for $(I,\F(X_a))$.

\begin{observation}
    $(S\setminus S_a)\cup S(X_a)$ is a solution.
\end{observation}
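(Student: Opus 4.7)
The plan is to deduce the claim directly from Lemma \ref{lemma:exchangeandind} by instantiating it with $S_a' := S(X_a)$. That lemma requires three properties of $S_a'$: it must consist of color-$a$ vertices, it must be contained in $I \setminus I^{\textsc{out}}(\D)$, and it must satisfy $d(u_i, S_a') \leq d_i$ for every $u_i \in \overline{U}_a$. I would verify these three conditions in turn for $S(X_a)$ and then invoke the lemma to conclude that $(S \setminus S_a) \cup S(X_a)$ is consistent with $(\D, M_1)$.

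The color and containment conditions are immediate from how $\F(X_a)$ is defined: every member of the family is a subset of $\colorof^{-1}(a) \cap (I \setminus (I^{\textsc{in}} \cup I^{\textsc{out}}))$, and hence any hitting set of $\F(X_a)$ is automatically a set of color-$a$ vertices inside $I \setminus I^{\textsc{out}}(\D)$. Before this matters, I would argue that no set in $\F(X_a)$ is empty, so that a hitting set exists at all. For any $u_j \in X_a$, the inclusion $X_a \subseteq N(S_a) \cap M_1$ supplies a neighbor of $u_j$ in $S_a$, and by the definition of $S_a$ combined with Observation \ref{obs:No-red-vertices} that neighbor lies in $\colorof^{-1}(a) \cap (I \setminus (I^{\textsc{in}} \cup I^{\textsc{out}}))$, certifying that the set of $\F(X_a)$ associated to $u_j$ is non-empty.

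For the distance condition, fix $u_i \in \overline{U}_a$. Since $X_a$ is a hitting set for $\mathcal{M}_a(\D, M_1)$, it contains some $u_j \in M_1^a(u_i)$, which by the definition of $M_1^a(u_i)$ satisfies $d(u_i, u_j) = d_i - 1$. Since $S(X_a)$ hits the set associated to $u_j$ in $\F(X_a)$, it contains some $v \in N(u_j)$ of color $a$, giving $d(u_j, v) = 1$. The triangle inequality then yields $d(u_i, v) \leq (d_i - 1) + 1 = d_i$, which is exactly the hypothesis needed for Lemma \ref{lemma:exchangeandind}. The main obstacle is really just the bookkeeping: ensuring that the excluded regions $I^{\textsc{in}}$ and $I^{\textsc{out}}$ never vacate the relevant color-$a$ neighborhoods, and that the hitting-set chain $X_a \to S(X_a)$ transfers the distance bound through $u_j$ exactly, without slack.
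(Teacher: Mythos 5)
Your proof is correct and follows the same route as the paper: both reduce the claim to Lemma~\ref{lemma:exchangeandind} with $S_a' := S(X_a)$. You simply fill in the details the paper's one-line argument leaves implicit (non-emptiness of the sets in $\F(X_a)$ via $X_a \subseteq N(S_a)$, and the chain $u_i \to u_j \in M_1^a(u_i) \to v \in S(X_a)$ giving $d(u_i, S(X_a)) \leq d_i$), all of which check out.
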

\begin{proof}
    By construction for any vertex $u_i\in \overline{U}_a$, $d(u_i,S_a)\leq d_i$. Observe that from Lemma~\ref{lemma:exchangeandind}, we know that $(S\setminus S_a) \cup S(X_a)$ is a solution. Note that $S_a$ is a hitting set for $(I,\F(X_a))$. This completes the proof.
\end{proof}
\begin{observation}
    Given $\D$ and $M_1$, we can find out  $S_a^* \subseteq I \setminus I^{\textsc{out}}(\D)$ in time $2^{\bigoh(k)}$. 
\end{observation}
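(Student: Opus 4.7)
The plan is to combine \textbf{Guess 3} with Proposition~\ref{prop:set-cover-runtime}. For each of the at most $2^{|M_1|} \leq 2^k$ subsets $X_a \subseteq M_1$, I would first check whether $X_a$ is a hitting set for the family $\mathcal{M}_a(\D, M_1) = \{M_1^a(u_i) : u_i \in \overline{U}_a\}$; any $X_a$ failing this check cannot arise from an actual solution and is discarded. This preprocessing is polynomial per guess.

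For each surviving candidate $X_a$, I would construct the set system $(I, \F(X_a))$ in polynomial time. Observe that $|\F(X_a)| \leq |X_a| \leq k$, so Proposition~\ref{prop:set-cover-runtime} yields a minimum hitting set $S(X_a)$ of $(I, \F(X_a))$ in time $2^{|\F(X_a)|} \cdot (|I| + |\F(X_a)|)^{\bigoh(1)} \leq 2^k \cdot \text{poly}(n)$. I would then return the smallest $S(X_a)$ across all surviving candidates as $S_a^*$.

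For correctness, the preceding observation ensures that the true optimum corresponds to some minimal hitting set $X_a^*$ of $\mathcal{M}_a(\D, M_1)$, which is enumerated among the $2^k$ candidates. For this guess, the optimum itself is a hitting set of $(I, \F(X_a^*))$, so $|S(X_a^*)| \leq |S_a^*|$; conversely, any hitting set of $(I, \F(X_a))$ is automatically a set of color-$a$ vertices in $I \setminus I^{\textsc{out}}(\D)$ covering every $u_i \in \overline{U}_a$ at distance at most $d_i$, since any $u_j \in X_a \cap M_1^a(u_i)$ that witnesses the hit places a chosen color-$a$ neighbor within distance $(d_i - 1) + 1 = d_i$ of $u_i$. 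Hence $|S_a^*| \leq |S(X_a^*)|$, giving equality.

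The total running time is $2^k$ guesses times $2^k \cdot \text{poly}(n)$ per guess, which is $2^{\bigoh(k)} \cdot \text{poly}(n)$. The main subtlety, and the step that deserves the most care, is the two-sided correctness of the reduction to hitting set, namely that every $\overline{U}_a$-vertex constraint is faithfully encoded by $\F(X_a)$ whenever $X_a$ is a valid footprint; the running time bound itself then follows immediately from Proposition~\ref{prop:set-cover-runtime}.
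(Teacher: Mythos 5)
Your proposal is correct and follows essentially the same route as the paper: enumerate the at most $2^k$ choices of $X_a \subseteq M_1$ (Guess 3), build the set system $(I, \F(X_a))$ with at most $k$ sets, and solve each instance via Proposition~\ref{prop:set-cover-runtime} in $2^k \cdot \text{poly}(n)$ time, for a total of $2^{\bigoh(k)}$. The paper's own proof of this observation records only the runtime accounting and delegates correctness to the surrounding observations; your added two-sided correctness check (that a valid footprint $X_a$ faithfully encodes the distance-$d_i$ constraints for $\overline{U}_a$) is consistent with what those observations establish.
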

\begin{proof}
    Observe that there are at most $2^k$ possible choices for $X_a$. For each choice of $X_a$, $\F(X_a)$ contains at most $|X_a|\leq k$ sets.  Since the \textsc{Hitting Set} problem is solvable in time $Poly(n)\cdot 2^{m}$ with $n$ variables and $m$ sets (By Proposition \ref{prop:set-cover-runtime}), $S_{a}^{*}$ can be found in $2^{\bigoh(k)}$ time.
\end{proof}

All the sets $\{S_a^* ~|~a \in \colorof(\overline{U})\}$ can be found in time at most $c\cdot poly(n)\cdot 2^{\bigoh(k)}$, leading to $\bigohstar(k^{\bigoh(k)})$ overall running time.

\begin{theorem}
 \mcs is \fpt parameterized by vertex cover number, admitting an algorithm running in time $\bigohstar(k^{\bigoh(k)})$, where $k$ is the size of the vertex cover.
\end{theorem}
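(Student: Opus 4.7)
The plan is to tie together the three guesses and the structural observations developed in this section into a single algorithm, and then carefully tally the cost. First I would invoke the standard FPT algorithm for \textsc{Vertex Cover} to compute a minimum vertex cover $M$ of size $k$ in time $\bigohstar(2^k)$, and set $I = V(G) \setminus M$. Observation~\ref{obs:diam-bounded-vc} guarantees that every pairwise distance in $G$ is at most $2k$, which is what makes enumeration of distance profiles combinatorially feasible.

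Next I would iterate over Guess~1 (the array $\D \in \{0, 1, \ldots, 2k-1\}^k$) and Guess~2 (the subset $M_1 \subseteq M$), producing $(2k)^k \cdot 2^k = k^{\bigoh(k)}$ combined branches. For each branch I would compute $M_0$, $M_x$, the forced-in set $I^{\textsc{in}}(\D)$, and the forbidden set $I^{\textsc{out}}(\D)$ in polynomial time using Observation~\ref{obs:i-in-construction}, and discard the branch immediately if $I^{\textsc{in}}(\D) \cap I^{\textsc{out}}(\D) \neq \emptyset$. From the partial solution $M_0 \cup I^{\textsc{in}}(\D)$ I can identify the set $\overline{U}$ of vertices not yet satisfied and split it into color classes $\overline{U}_a$.

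The crux of the argument is Lemma~\ref{lemma:exchangeandind}, which decouples the remaining work by color: for each color $a \in \colorof(\overline{U})$ it suffices to find, independently, a minimum-size set $S_a^* \subseteq I \setminus I^{\textsc{out}}(\D)$ of color-$a$ vertices that covers every $u_i \in \overline{U}_a$ within distance $d_i$. For this I would enumerate Guess~3: for each color $a$, cycle through the $2^k$ candidate minimal hitting sets $X_a \subseteq M_1$ of $\mathcal{M}_a(\D,M_1)$. Given $X_a$, the associated set system $(I, \F(X_a))$ has at most $|X_a| \leq k$ sets, so Proposition~\ref{prop:set-cover-runtime} delivers a minimum hitting set in $2^{\bigoh(k)} \cdot \text{poly}(n)$ time. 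Taking $S_a^*$ to be the smallest such hitting set over all choices of $X_a$, assembling the candidate $M_0 \cup I^{\textsc{in}}(\D) \cup \bigcup_a S_a^*$, and finally returning the smallest candidate across all branches that is consistent (verifiable in polynomial time) yields the required minimum consistent subset.

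Totalling the costs, the outer enumeration contributes $k^{\bigoh(k)} \cdot 2^k$ branches, and the inner per-branch work contributes $c \cdot 2^k \cdot 2^{\bigoh(k)} \cdot \text{poly}(n)$, for a grand total of $k^{\bigoh(k)} \cdot \text{poly}(n,c)$, matching the claimed bound. The main conceptual obstacle is justifying that resolving one color at a time greedily is lossless; Lemma~\ref{lemma:exchangeandind} handles exactly this, using the definition of $I^{\textsc{out}}(\D)$ to preclude any new distance violation when $S_a$ is swapped for $S_a'$, which is what legitimises the per-color decoupling and the reduction to plain \textsc{Hitting Set} on each color slice.
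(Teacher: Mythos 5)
Your proposal is correct and follows essentially the same route as the paper: the same three guesses over $\D$, $M_1$, and the minimal hitting sets $X_a$, the forced/forbidden sets $I^{\textsc{in}}(\D)$ and $I^{\textsc{out}}(\D)$, the per-color decoupling via Lemma~\ref{lemma:exchangeandind}, and the reduction to a \textsc{Hitting Set} instance with at most $k$ sets solved via Proposition~\ref{prop:set-cover-runtime}, yielding the same $k^{\bigoh(k)}\cdot\text{poly}(n,c)$ bound. The only (sensible) additions are the explicit final consistency check and the explicit use of the FPT vertex cover algorithm, neither of which changes the argument.
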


  \section{\mcs Parameterized by Neighborhood Diversity/Types of Vertices}

 We are given a graph $G = (V, E)$. Let $V = \bigsqcup_{i \in [r]} T_{i}$ be a neighborhood decomposition of a graph $G$ of minimum size. Note that, $\forall i \in [r]$, the induced graph $G[T_{i}]$ is either an independent set or a clique, and for distinct $i, j \in [r]$, either there is no edge between $T_{i}$ and $T_{j}$ (or) all possible edges exist between vertices in $T_{i}$ and $T_{j}$.\smallskip

\begin{tcolorbox}[mydefstyle, title={\sc Consistent Subset Problem Parameterized by Neighborhood Diversity}]
  \textbf{Input:} A graph $G = (V = \bigsqcup_{i \in [r]} T_{i}, E)$ where for each $u,v \in T_i$, $N(u) \setminus \{v\} = N(v) \setminus \{u\}$ along with a coloring function $\colorof: V(G) \rightarrow [c]$.\\
  \textbf{Question:} Compute a minimum consistent subset (\mcs) $S$ for $(G, \colorof)$.\\
  \textbf{Parameter:} $r$
\end{tcolorbox}\smallskip

We show that \mcs is \fpt parameterized by neighborhood diversity $r$. To that end, we prove the following claim, which we use in the correctness proof of our algorithm at the end of this section.

\begin{claim}\label{claim:none-one-all-vertices-from-a-type-color}
Given a graph $G = (V, E)$ with neighborhood diversity $r$ (i.e., $V = \bigsqcup_{i \in [r]} T_{i}$), there is an \mcs $S$ for $(G, \colorof)$ such that for each type $T_{i}$ and for each color $j$, the set $S$ has $0,1$ or all the vertices of color $j$ from $T_{i}$. Formally, $\forall i \in [r]$ and $\forall j \in [c]$, we have $|T_{i} \cap \colorof^{-1}(j) \cap S| \in \{0, 1, |(T_{i} \cap \colorof^{-1}(j))|\}$.
\end{claim}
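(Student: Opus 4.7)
The plan is an exchange argument on an arbitrary minimum consistent subset $S$. I would show that if, for some type $T_{i}$ and some color $j$, writing $A := T_{i} \cap \colorof^{-1}(j)$, we had $2 \leq |A \cap S| < |A|$, then some element of $A \cap S$ can be deleted from $S$ without losing consistency, contradicting minimality. This forces $|A \cap S| \in \{0, 1, |A|\}$, giving the claim.

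The key enabling observation I would prove first is a distance-symmetry lemma: if $u, v \in T_{i}$ satisfy $N(u) \setminus \{v\} = N(v) \setminus \{u\}$, then $d(w, u) = d(w, v)$ for every $w \notin \{u, v\}$. The proof takes a shortest $w$-to-$v$ path and inspects its penultimate vertex $y$; if $y \neq u$, then $y \in N(v) \setminus \{u\} = N(u) \setminus \{v\}$ and replacing the last edge yields a same-length path to $u$, so $d(w, u) \leq d(w, v)$; the case $y = u$ would force $d(w, u) = d(w, v) - 1$, but the symmetric argument starting from a shortest $w$-to-$u$ path then yields $d(w, v) \leq d(w, u)$, a contradiction. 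Hence the two distances coincide.

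With this lemma in hand, fix $v, v' \in A \cap S$ distinct and $u \in A \setminus S$, and set $S' := S \setminus \{v\}$. I would verify the consistency of every vertex under $S'$ by cases. For $w \neq v$ with $v \notin \NN(w, S)$, nothing changes. For $w \neq v$ with $v \in \NN(w, S)$, the lemma gives $v' \in \NN(w, S)$ at the same minimum distance, so $v'$ remains a nearest neighbor in $S'$ and supplies the color $j$ that $v$ was contributing; no color other than $j$ can be lost, because $v$ itself only carries color $j$. The delicate case is $w = v$. If $T_{i}$ is a clique, $v'$ sits at distance $1$ from $v$ in $S'$ with color $j$. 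If $T_{i}$ is independent and $N(T_{i}) \cap S = \emptyset$, then $v'$ is a nearest neighbor of $v$ at distance $2$ in $S'$, again with color $j$. The remaining subcase, where $T_{i}$ is independent and $N(T_{i}) \cap S \neq \emptyset$, is where the argument pivots.

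The main obstacle is precisely that last subcase. Here deleting $v$ genuinely moves $v$'s nearest-neighbor distance from $0$ to $1$, and the distance-$1$ set $N(T_{i}) \cap S$ is not a priori known to contain a color-$j$ vertex. I would close the gap by invoking the consistency of $u$ in the original $S$: since $u \notin S$, $N(u) = N(T_{i})$, and $N(T_{i}) \cap S \neq \emptyset$, the set $\NN(u, S)$ equals $N(T_{i}) \cap S$, and consistency of $u$ (whose color is $j$) forces a color-$j$ vertex to lie in it; this vertex is still in $S'$ and serves as a color-$j$ nearest neighbor of $v$. This is the only step that genuinely uses both hypotheses $|A \cap S| \geq 2$ and $|A \setminus S| \geq 1$, which is exactly why the trichotomy $\{0, 1, |A|\}$ arises rather than a cleaner ``at most $1$'' statement. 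The degenerate case in which $T_{i}$ is an isolated independent type with no external neighbors is handled separately and trivially, since every vertex of such a $T_{i}$ must lie in any consistent subset, giving $A \subseteq S$ outright.
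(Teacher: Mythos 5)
Your proof is correct and follows essentially the same exchange argument as the paper: delete one of the two selected color-$j$ vertices of $T_i$, use the fact that same-type vertices are equidistant from every other vertex to preserve consistency of all $w \neq v$, and use the consistency of the unselected twin $u$ (the paper's $z$) to restore consistency of the deleted vertex itself. The only difference is stylistic: the paper handles the deleted vertex uniformly via the identity $d(v, W) = d(u, W)$ for any $W$ disjoint from $\{u, v\}$, whereas you reach the same conclusion through an explicit clique/independent-set case analysis that isolates the one subcase where $u$'s consistency is genuinely needed.
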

\begin{proof}
  Let $S$ be an \mcs such that includes $\ell$ vertices from $T_i \cap \colorof^{-1}(j)$ where $\ell \notin \{0,1,|T_{i} \cap \colorof^{-1}(j)|\}$. Let $x,y$ be any two distinct \ vertices in $T_{i} \cap \colorof^{-1}(j) \cap S$, and let $z$ be a vertex in $(T_i \cap \colorof^{-1}(j)) \setminus S$. We claim that the set $S' := S \setminus \{x\}$ is also a consistent subset, thereby contradicting the minimality of $S$ as an \mcs{} for $(G, \colorof)$. To establish this, we demonstrate that every vertex $v$ that was \emph{consistent} with respect to $S$ remains consistent with respect to $S'$, by analyzing the following two cases.\medskip

  \noindent\textbf{Case 1: ($v \neq x$)}  

  Since $x$ and $y$ are of the same type and have the same color, we have,

  \begin{equation*}
    d(v, S \cap \colorof^{-1}(j)) = d(v, (S \setminus \{x\}) \cap \colorof^{-1}(j)) \quad \forall j \in [c]
  \end{equation*} 

  and, since $v$ is consistent w.r.t. $S$, $\forall j \in [c]$,  \begin{align*}
    d(v, S \cap \colorof^{-1}(\colorof(v))) & \leq d(v, S \cap \colorof^{-1}(j)) \\
    \Rightarrow d(v, (S \setminus \{x\}) \cap \colorof^{-1}(\colorof(v))) & \leq d(v, (S \setminus \{x\}) \cap \colorof^{-1}(j))
  \end{align*}

  making it also consistent w.r.t. $S \setminus \{x\}$.\smallskip

  \noindent\textbf{Case 2: ($v=x$)} 
  
  Observe that, \begin{align*}
    d(x, (S \setminus \{x\}) \cap \colorof^{-1}(j)) 
    &= d(z, (S \setminus \{x\}) \cap \colorof^{-1}(j)) \\
    &= d(z, S \cap \colorof^{-1}(j)) \quad \forall j \in [c].
  \end{align*}
  
  And $z$ being consistent w.r.t. $S$, immediately implies $x$ to be consistent w.r.t. $S\setminus\{x\}$. Therefore, $S \setminus \{x\}$ forms a strictly smaller consistent subset, contradicting the minimality of $S$.
\end{proof}

The above claim essentially states that there exists a minimum consistent subset (\mcs) that, for each color from any type, includes either 0, 1, or all vertices of that color. With this claim in place, we are now ready to present the first step of our algorithm.
\subsection{Step 1: Identifying the Nature of Responsible Colors}

We start by defining \emph{partitions} and sets of \emph{responsible colors} with respect to a potential \mcs $S$ below. Notice that while we may guess (i.e., generate all possible) partitions required for a desired \mcs, generating all sets of responsible colors may not be possible in \fpt time. We use a clever approach to bypass the exhaustive generation of responsible color sets, as described at the end of these definitions.\smallskip

\noindent\textbf{Partitions (w.r.t. an \mcs $S$):} We begin by guessing a partition $\mathcal{T}$ of the $r$ types into 3 sets, namely $\T_0$, $\T_1$, and $\T_2$ with respect to a potential \mcs $S$ for $(G, \colorof)$ as follows:

\begin{itemize}
    \item $\T_{0} = \{ T_i ~|~ i \in [r] \text{ and } T_{i} \cap S = \emptyset \}$
    \item $\T_{1} = \{ T_i ~|~ i \in [r] \text{ and } |\colorof(T_{i} \cap S)| = 1 \}$
    \item $\T_{2} = \{ T_i ~|~ i \in [r] \text{ and } |\colorof(T_{i} \cap S)| > 1 \}$
\end{itemize}

In other words, $\T_0$ is the set of types that contain no vertex from $S$, $\T_1$ is the set of types from which all vertices selected into $S$ are of the same color, and $\T_2$ is the set of types from which vertices of multiple colors are selected into $S$.\smallskip

\noindent\textbf{Responsible Colors (w.r.t an MCS $S$):}  
Given an \mcs $S$ and a corresponding 3-partition $\T$, a small inclusion-wise-minimal set of colors $\R$ is a set of \emph{responsible colors}  if and only if it satisfies the following.

\begin{itemize}
    \item For each type $T_i\in \T_1$,  $\R$ contains the color $\colorof(T_{i} \cap S)$.
    \item For each type in $\T_2$, the set $\R$ contains at least two distinct colors from $\colorof(T_{i} \cap S)$.
\end{itemize}

Observe that any set of responsible colors has size at most $2r$, due to the minimality property. Moreover, any such set is sufficient to determine the partition $\T$ of types. And, for a given $S$, there may exist multiple sets of responsible colors, possibly more than polynomially (in $n$) many and finding one may not even be possible in \fpt time. Nevertheless, let $R = \{\rc_1, \ldots, \rc_{k}\}$ denote an arbitrary set of responsible colors for $S$ where $k\leq 2r$. We prove the following property of a set of responsible colors which we use in the final correctness proof of our algorithms. The property is that basically for every vertex $v$, its closest distance to a solution vertex in $S$ can be determined (same as) by its closest distance to a solution vertex whose color is from the set of responsible colors.

\begin{claim}\label{resp}
For a set of responsible colors $R$ of $S$ and an arbitrary vertex $v$, $d(v,S\setminus (S\cap \colorof^{-1}(\colorof(v))))= d (v, S \cap (\cup_{j\in R\setminus \colorof(v)} \colorof^{-1}(j)))$. 
\end{claim}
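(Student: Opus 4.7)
The $\geq$ direction is immediate because $S \cap \bigl(\bigcup_{j \in R \setminus \colorof(v)} \colorof^{-1}(j)\bigr)$ is a subset of $S \setminus (S \cap \colorof^{-1}(\colorof(v)))$, so a minimum over the former is at least as large as a minimum over the latter. The real content is the $\leq$ direction: I would pick some $u \in S$ with $\colorof(u) \neq \colorof(v)$ that attains the left-hand side, and then produce a vertex $u'' \in S$ with $\colorof(u'') \in R \setminus \colorof(v)$ satisfying $d(v, u'') \leq d(v, u)$.

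If $\colorof(u) \in R$ already, take $u'' = u$ and we are done. Otherwise, let $T_i$ be the type containing $u$, and proceed by case analysis on which of $\T_0, \T_1, \T_2$ contains $T_i$. The case $T_i \in \T_0$ is vacuous since $u \in T_i \cap S$ would contradict $T_i \cap S = \emptyset$. The case $T_i \in \T_1$ forces $\colorof(u) = \colorof(T_i \cap S)$, which by definition of responsible colors lies in $R$, contradicting our assumption $\colorof(u) \notin R$. So we must have $T_i \in \T_2$. By the definition of responsible colors, $R$ then contains at least two distinct colors from $\colorof(T_i \cap S)$; at least one of them, call it $c'$, differs from $\colorof(v)$. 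Let $u'' \in T_i \cap S$ be any vertex with $\colorof(u'') = c'$.

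It remains to verify $d(v, u'') = d(v, u)$. This is where same-type equivalence comes in: since $u, u'' \in T_i$, we have $N(u) \setminus \{u''\} = N(u'') \setminus \{u\}$. When $v \notin T_i$, any shortest $v$-to-$u$ path's penultimate vertex $w$ is either outside $T_i$ (hence also a neighbor of $u''$) or inside $T_i$ (which forces $T_i$ to be a clique, so $w$ is still a neighbor of $u''$); symmetrically for the other direction, giving $d(v,u) = d(v,u'')$. When $v \in T_i$, both $u$ and $u''$ differ from $v$ in color (so $v \notin \{u, u''\}$), and then $d(v,u) = d(v,u'') = 1$ if $T_i$ is a clique and $=2$ via any shared external neighbor if $T_i$ is an independent set. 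Combining, $u''$ witnesses that the right-hand-side minimum is at most $d(v,u)$, completing the proof.

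\textbf{Main obstacle.} I expect the only real subtlety to be the case $v \in T_i$ with $T_i$ an independent set, because if $T_i$ is an isolated type with no external neighbors then distances inside $T_i$ are infinite; however, in that degenerate situation $d(v,u)$ is also infinite, so the identity still holds trivially. Everything else is a direct consequence of the definition of responsible colors in $\T_2$ and the neighborhood-diversity equivalence that vertices of the same type are indistinguishable from the outside.
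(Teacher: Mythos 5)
Your proof is correct and follows essentially the same route as the paper: take a closest solution vertex $u$ of a different color, rule out $\T_0$ and handle $\T_1$ via the definition of responsible colors, and in the $\T_2$ case swap $u$ for a same-type solution vertex $u''$ whose color lies in $R\setminus\colorof(v)$, using same-type indistinguishability to get $d(v,u'')=d(v,u)$. The only difference is that you spell out the distance-preservation argument (including the $v\in T_i$ and isolated-type corner cases) that the paper simply asserts.
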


\begin{proof}
 Let $z$ be a vertex in $S$ of a color other than $\colorof(v)$ such that the distance from $v$ to $z$ is minimized over all vertices in $S$ whose colors are different from that of $v$, i.e.,\\
 $d(v,z)=d(v,S\setminus (S\cap \colorof^{-1}(\colorof(v))))$. If $z \in \T_1$, then by definition of $\T_1$, we have $\colorof(z) \in R$, satisfying the claim. Otherwise, if $z \in T_i$ for some $T_{i}\in \T_2$ and $\colorof(z) \notin R$, then by the definition of responsible colors, there must exist a $y \in S\cap T_i$ of a different color (i.e., $\colorof(y) \in R$, $\colorof(y) \neq \colorof(v)$). But then, we have $d(v, z) = d(v, y)$, proving the claim .
\end{proof}

We reiterate that although we may not be able to decide on an $R$, we can guess whether the vertices corresponding to the colors in $R$ are included in the solution from each type as described below.\smallskip 

\noindent\textbf{Guessing solution occurrences (nature) of colors in $R$:}  
We guess the solution occurrence of each responsible color $\rc \in R$ in each type via a function $occ : R \rightarrow 2^{[r]}$, where $occ(\rc)$ is the set of types that have vertices in $S \cap \colorof^{-1}(\rc)$. A \emph{valid} occurrence function $occ$ must be consistent with the following partitioning requirements consistent with $S$ and $\T$.

\begin{itemize}
\item For each type $T_i \in \T_{0}$, there is no $j\in[k]$ such that $i\in occ(\rc_j)$.
    \item For each type $T_i \in \T_{1}$, there is precisely one $j\in[k]$ such that $i\in occ(\rc_j)$.
    \item For each type $T_i \in \T_{2}$, there are at least two colors $\rc_{j_1}, \rc_{j_2} \in R$ such that $i \in occ(\rc_{j_1})\cap occ(\rc_{j_2})$.
\end{itemize}

At the end of Step 1, we assume that we have correctly fixed a partition $\T$ (with respect to a potential \mcs $S$), along with a consistent and valid occurrence function $occ$ for some arbitrary set of responsible colors $R$ (for $S$).\smallskip

\begin{figure}[!h]
\centering
\includegraphics[width=0.7\columnwidth]{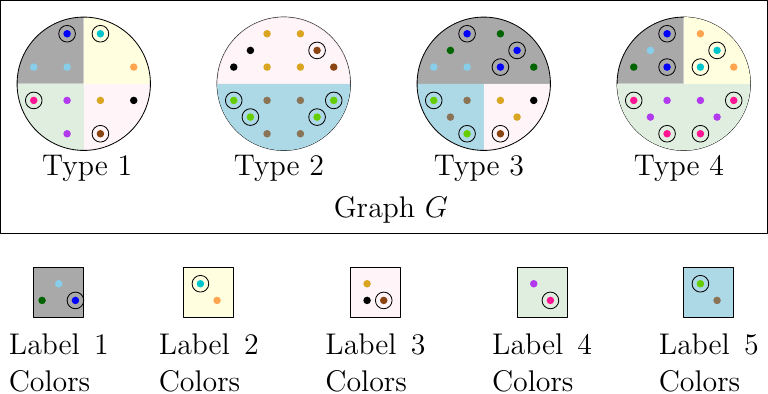} 
\caption{Each disk represents a type in the graph $G$. Colors are grouped into labels, and each level is indicated by a distinct background color. From each label, a representative color is selected, shown as a point encircled by a circle.}
\label{fig1}
\end{figure}

\subsection{Step 2: Label Coding to Identify a Most Suitable Set of Responsible Colors}
In this step, we apply a label-coding function that assigns each color in $R$ a distinct label with \emph{sufficiently high} probability. This allows us to break the problem into simpler subproblems, each of which is structurally easier, solvable in $f(r)\cdot n^{\bigoh(1)}$ time, and independent of the others. In each subproblem, we search for the most appropriate color that can assume the role of a responsible color from $R$. Since the input instance already associates colors with vertices, we use the term ``label coding'' instead of ``color coding'' to avoid confusion, although the two are essentially equivalent.\smallskip

Formally, this step aims to identify a set of actual colors from the input that can take on the roles of the guessed responsible colors, in a manner consistent with the guessed function $occ$, and compute the smallest possible consistent subset that realizes this correspondence. We proceed as follows.\smallskip

\noindent\textbf{Label Coding:} We label-code \cite{CyganFKLMPPS15} all the colors using $k$ labels and partition the color set $[c]$ as $[c] = C_{1} \uplus \cdots \uplus C_{k}$, such that with \emph{high} probability, each responsible color $\rc_i$ of $R$ gets the label $i$. We call such an event a \emph{nice} label-coding. Following a \emph{nice} label-coding, our goal becomes to identify the best responsible (a choice that gives the smallest possible consistent subset) color $c_j$s from each $C_i$ that aligns with the guessed/chosen partitioning constraints and $occ$ function.\smallskip

\noindent\textbf{Caution Constraints:} We select the most suitable responsible color for each $C_i$, where $i \in [k]$ in the next step. While selecting these most suitable responsible colors independently from each $C_i$,  we impose the following \emph{caution constraints} to ensure correctness. In our (desired) solution, in each $C_i$,

\begin{enumerate}
    \item[] C1: There is at least one color (denoted by $c_{i'}$) that has solution vertices precisely in all types of $occ(\rc_i)$.
    \item[] C2: There is exactly one color $c_{i'}$ that has solution vertices from any type $T_j\in \T_{1}$ where $T_j\in occ(\rc_i)$.
    \item[] C3: There is no color $c_{i'}$ that has solution vertices from any type in $ \T_{0}\cup (\T_1\setminus occ(\rc_i))$.
\end{enumerate}

These \emph{caution constraints} together ensure that, in the desired solution we aim to construct, the types corresponding to the selected vertices satisfy the identified (or guessed) partition $\T$.

\subsection{Step 3: Selection of a Best Responsible Color from Each Label with Caution Constraints}

To determine the most suitable responsible color from each $C_i$ and combine them to return an \mcs, we crucially exploit the fact that the subproblems of selecting responsible colors from each $C_i$ are \emph{independent}. At a high level, this independence arises because the partition $\T$, determined by the occurrence function $occ$ over $R$, essentially fixes, for every vertex, the distance of closest solution vertex of a different color. This, in turn, determines the \emph{minimum number of vertices} required from that particular color in the solution to ensure \emph{consistency} for all vertices of the same color. Since both the partition $\T$ and the function $occ$ are already fixed, we can compute, for each individual color, the smallest subset of vertices that must be included in a solution, as long as the $occ$ function requirements from $C_i$ and the \emph{caution constraints} are satisfied. Below we describe the exact procedure along with a formal correctness argument for the same.\smallskip

For a fixed label $C_{i}$, we go over each $c_{j} \in C_{i}$ expecting it to be a most suitable responsible color from $C_{i}$ and compute the size of a smallest set of vertices required to be in the solution for the consistency of all vertices of colors in $C_i$. First, from $occ(\rc_i)$, we determine the types from which vertices of color $c_j$ are to be selected into $S$. Recall from Claim \ref{claim:none-one-all-vertices-from-a-type-color}, either one or all vertices of color $c_j$ for each of the types in $occ(\rc_{i})$ are selected into a potential solution $S$. Let $O_{j}$ be the set of all possible subsets of vertices of color $c_{j}$ that may appear in $S$ in accordance with $occ(\rc_{i})$. Thus, $|O_{j}| \leq 2^{r}$. For a fixed $o \in O_{j}$, let $n'_{j, o}$ ($|o|$) be the number of vertices of color $c_{j}$ in $S$ and $n_{k, o}$ be the minimum number of vertices of color $c_{k}\in C_i\setminus\{c_{i}\}$ (again we have at most $3^k$ such choices) one has to pick into a solution of color $c_{k}$ adhering to caution constraints while satisfying the consistency requirement of all vertices of color $c_k$ and of all the vertices of color $c_j$ (with respect to  the choice $o$ ).\smallskip 

We formally check the consistency requirements as follows (in addition to caution constraints). For a choice $o$ of color $c_j$ and any arbitrary subset $o'$ of color $c_k$ (at most $3^k$ many such choices) that are to be selected into a potential solution, we must ensure that:
\begin{align}
d(v, o') &\leq \min\left\{ d(v, T_i) \mid T_i \in \mathcal{T}_1 \cup \mathcal{T}_2 \right\} \quad \forall v \text{ of color } c_k \label{eq:ck_consistency} \\
d(u, o) &\leq d(u, o') \quad \forall ~ u \text{ of color } c_j. \label{eq:cj_wrt_ck}
\end{align}

Equation~\eqref{eq:ck_consistency} ensures the consistency requirement for all vertices of color $c_k$, and Equation~\eqref{eq:cj_wrt_ck} ensures consistency of all vertices of color $c_j$ with respect to color $c_k$. Note that we do not have to worry about the \emph{consistency  requirements} between two colors $c_k$ and $c_k'$; since ensuring that each color is consistent with respect to a responsible set of colors (in \eqref{eq:ck_consistency})  is sufficient for it to be consistent with all the colors, due to Claim \ref{resp}.
 Let $S_i$ be a smallest subset of solution vertices  of colors in $C_{i}$ that satisfy the caution constraints along with the above mentioned consistency requirements, i.e.,

$$|S_i| = \min_{c_{j} \in C_{i}} \{ \min_{o \in O_{j}} \{ n'_{j, o} + \sum_{c_{k} \in C_{i} \setminus \{c_{j}\}} n_{k, o} \}\}$$

We return $S = \bigcup_{i \in k} S_{i}$ as the desired \mcs. Before presenting our final algorithm, we provide a correctness proof of the above statement by establishing the independence of the subproblems, specifically, that the selection of the best responsible color from each $C_i$ can be done independently. The following lemma essentially states that the smallest possible set of solution vertices of colors from $C_i$, satisfying the caution constraints and consistency requirements, can substitute the vertices of the same colors in an \mcs without violating consistency of any vertex or increasing the solution size.

\begin{lemma}\label{final}
For any \mcs $S$ with partition $\mathcal{T}$, $occ$, a set of responsible colors $R$, and following a nice label coding $[c] = C_{1} \uplus \cdots \uplus C_{k}$, let $S_i$ be a smallest possible set of vertices selected  from all  the colors in $C_i$ with $c_j$ being the responsible color, while adhering to the caution constraints and consistency requirements. $S'= S\cap (\cup_{j\notin C_i} \colorof^{-1}(j)) \cup S_i$ is also an \mcs.
\end{lemma}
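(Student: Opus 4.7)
My plan is to split the proof into two parts: first establish that $|S'| \leq |S|$ so that minimality is preserved, and then show that $S'$ is a consistent subset. Together these yield that $S'$ is an \mcs{}.

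For the size bound I would argue that the restriction $T := S \cap \bigcup_{c \in C_i} \colorof^{-1}(c)$ is itself a legitimate candidate for the role of $S_i$. Under the nice label coding, $\rc_i \in R \cap C_i$ is the \emph{unique} responsible color placed into $C_i$, and taking $c_{i'} = \rc_i$ witnesses all three caution constraints: C1 holds because $\rc_i$'s vertices in $S$ appear in precisely the types $occ(\rc_i)$ by definition of the occurrence function; C2 holds because for any $T_j \in \T_1 \cap occ(\rc_i)$ the set $T_j \cap S$ is monochromatic in $\rc_i$; and C3 holds because $\T_0$-types contain no $S$-vertex, while the unique color on any $T_j \in \T_1 \setminus occ(\rc_i)$ is some responsible color different from $\rc_i$ and hence lies outside $C_i$. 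The consistency requirements \eqref{eq:ck_consistency} and \eqref{eq:cj_wrt_ck} for $T$ follow directly from consistency of $S$. Minimality of $S_i$ then gives $|S_i| \leq |T|$, and therefore $|S'| \leq |S|$.

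For consistency, I would fix an arbitrary vertex $v$ with $\colorof(v)=a$ and split into two cases. If $a \notin C_i$, then $S' \cap \colorof^{-1}(a) = S \cap \colorof^{-1}(a)$, so $v$'s nearest same-colored vertex is unchanged, and it suffices to verify that no $u \in S_i$ is strictly closer to $v$ than $d(v, S \cap \colorof^{-1}(a))$. By constraint C3 every such $u$ lies in a type $T_u \in \T_2 \cup (\T_1 \cap occ(\rc_i))$, and each such type already hosts at least one vertex of $S$; because vertices within the same type have uniform distance to any external vertex (and uniform within-type distance when $v$ happens to sit in $T_u$), we obtain $d(v,u) \geq d(v,S) \geq d(v, S \cap \colorof^{-1}(a))$ by consistency of $S$, as needed. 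If instead $a \in C_i$, the requirements \eqref{eq:ck_consistency} and \eqref{eq:cj_wrt_ck} directly give $v$'s consistency against all colors in $C_i$, and Claim~\ref{resp} reduces consistency against colors outside $C_i$ to consistency against responsible colors only; responsible colors outside $C_i$ are unchanged in $S'$, and the role of $\rc_i$ is taken over in $S_i$ by some $c_{i'} \in C_i$ with $occ(c_{i'}) = occ(\rc_i)$, so the relevant distances are preserved type-by-type.

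The main obstacle I expect is the case $a \in C_i$: the inequalities satisfied by $S_i$ explicitly compare distances only inside $C_i$ and against the types in $\T_1 \cup \T_2$, so lifting these to global consistency against every remaining color hinges critically on Claim~\ref{resp} together with the bookkeeping that the nice label coding and caution constraints force the responsible-color structure to be preserved between $S$ and $S'$.
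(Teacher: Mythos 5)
Your proposal is correct and its core follows the same route as the paper's proof: the same case split on whether $\colorof(v)$ lies in $C_i$, with Claim~\ref{resp} and the caution constraints doing the heavy lifting in the $\colorof(v)\in C_i$ case. Two points where you diverge are worth noting. First, you explicitly verify minimality by exhibiting $T = S\cap\bigcup_{c\in C_i}\colorof^{-1}(c)$ as a feasible candidate for $S_i$ (checking C1--C3 with $c_{i'}=\rc_i$ and deriving \eqref{eq:ck_consistency}--\eqref{eq:cj_wrt_ck} from the consistency of $S$), which gives $|S_i|\leq|T|$ and hence $|S'|\leq|S|$; the paper's proof only argues consistency of $S'$ and leaves this feasibility step implicit, so your version actually closes a small gap in the write-up. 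Second, in the case $\colorof(v)\notin C_i$ the paper re-invokes Claim~\ref{resp} with the modified responsible set $R'=(R\setminus C_i)\cup\{c_j\}$ to equate $d(v,S\setminus\colorof^{-1}(\colorof(v)))$ with $d(v,S'\setminus\colorof^{-1}(\colorof(v)))$, whereas you argue more elementarily that C3 forces every vertex of $S_i$ into a type of $\T_2\cup(\T_1\cap occ(\rc_i))$, each of which already hosts an $S$-vertex, so type-uniformity of distances yields $d(v,u)\geq d(v,S)=d(v,S\cap\colorof^{-1}(\colorof(v)))$; this avoids having to certify that $R'$ is a bona fide responsible set for $S'$ in that case, at the cost of one extra edge case (when $v$ itself lies in the type) which you handle. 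Your treatment of the remaining case is at the same level of rigor as the paper's, and you correctly identify the lifting from the $C_i$-internal inequalities to global consistency via Claim~\ref{resp} as the delicate step.
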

\begin{proof}
  Let $v$ be an arbitrary vertex. We prove the consistency of $v$ w.r.t. $S'$ for the following two exhaustive cases.\smallskip 

  \noindent\textbf{Case 1:} ($\colorof(v)\notin C_i$) 
  Since $v$ is consistent w.r.t. $S'$, i.e., for any $j\in[c]$, we have,  \begin{align*}
  d(v, S \cap \colorof^{-1}(\colorof(v))) = d(v, S)
  \end{align*}

  But $d(v, S \cap  \colorof^{-1}(\colorof(v)))=d(v, S' \cap  \colorof^{-1}(\colorof(v)))$ as  $S$ and $S'$ contain the exact same set of vertices with color $\colorof(v)$. Moreover, for $S'$, we have a set of responsible colors $R'=(R\cup \{c_j\}) \setminus C_i $. Note that, $R$ and $R'$ differ by exactly one color and vertices of $c_j$ appear in $S_i$ exactly in the types as mapped by $occ$ function for the responsible color of $R$ from $C_i$ due to the caution constraints. And from Claim 1.11, 
  \begin{align*}
  d(v,S\setminus (S\cap \colorof^{-1}(\colorof(v)))) &= d (v, S \cap (\cup_{j\in R\setminus \colorof(v)} \colorof^{-1}(j)) )\\ 
  &=d(v,S'\setminus (S'\cap \colorof^{-1}(\colorof(v))))
  \end{align*}

  This together with the fact that $S$ and $S'$ contain the exact same set of vertices with color $\colorof(v)$ implies that  \begin{align*}
  d(v, S' \cap \colorof^{-1} (\colorof(v))) = d(v, S')
  \end{align*}

  \noindent\textbf{Case 2:} ($\colorof(v)\in C_i$) 
  Note that we have constructed $S_i$ adhering to the consistency requirements and caution constraints. Now for contradiction say there is a vertex of a different color vertex $u'$ such that $d(v,u')<d(v, S'\cap \colorof^{-1}(\colorof(v)))$. But then, there is also a vertex $u$ with a color from $R'$ such that $d(v,u)<d(v, S'\cap \colorof^{-1}(\colorof(v)))$. But this impossible due to the fact that consistency requirements (Eqn. (2)) if $\colorof(u)=c_j$ and (Eqn. (1)) if $\colorof(u)\neq c_j$) are maintained during our solution construction. 

  Hence, from both the cases, we have $S'$ is also an MCS.
\end{proof}

 Lemma \ref{final} ensures that one can compute each $S_i$ of minimum possible size from the corresponding label $C_i$, independently of the others, and combine them to obtain a desired Minimum Consistent Subset (\mcs). A formal algorithm is presented in Algorithm \ref{alg:mcs-nd}.\smallskip

\begin{algorithm}[tbh!]
\caption{\mcs parameterized by Neighborhood Diversity} \label{alg:mcs-nd}
\begin{algorithmic}[1]
  \STATE Generate all 3-partitions of the types into $\T_{0}, \T_{1}$, and $\T_{2}$.
  \STATE Generate all valid occurrence functions $occ$.
  \FOR{each fixed partition and valid $occ$}
    \STATE Label-code the colors $[c]$ using $k$ labels ($k \leq 2r$), and partition $[c] = C_{1} \uplus \cdots \uplus C_{k}$ based on the labels the colors receive.
    \FOR{$i = 1$ \TO $k$}
      \FOR{each $c_{j} \in C_{i}$}
        \STATE Let $c_{j}$ be the responsible color in $C_{i}$.
        \FOR{each $o \in O_{j}$}
          \STATE Compute $n'_{j,o}$ and the corresponding set of vertices (call it $S'_{j,o}$).
          \FOR{each $c_{k} \in C_{i}\setminus\{c_{j}\}$}
            \STATE Compute $n_{k,o}$ (as described in Step 3) and its vertex set $S'_{k,o}$.
          \ENDFOR
        \ENDFOR
      \STATE Keep track of the $S'_{j} = S'_{j, o} \cup (\bigcup_{c_{k} \in C_{i} \setminus \{c_{j}\}} S'_{k, o})$ which minimizes $n'_{j, o} + \sum_{c_{k} \in C_{i} \setminus \{c_{j}\}} n_{k, o}$.
      \ENDFOR
      \STATE $S_i \gets \arg\min_{S'_{j}: c_j \in C_i} |S'_j|$.
    \ENDFOR
    \STATE Keep track of $S \gets \bigcup_{i\in [k]} S_i$ of minimum cardinality.
  \ENDFOR
  \STATE \textbf{return} $S$.
\end{algorithmic}
\end{algorithm}

\noindent\textbf{Runtime Analysis:} The algorithm branches over $3^r$ partitions (choices for $\T$) and $r^{\bigoh(r)}$ possible $\mathit{occ}$ functions, each verifiable in polynomial time. A random labeling yields a \emph{nice} label-coding with probability at least $k^{-k}$, and such codings can be enumerated in $k^{\bigoh(k)} \cdot n^{\bigoh(1)}$ time. For each component $C_i$ and a responsible color $c_j$, there are $c \leq n$ choices, and at most $2^r$ options for $|O_j|$. For each $o \in O_j$, the value $n'_{j,o}$ can be computed in polynomial time. For each of the non-responsible colors $c_k$, values $n_{k,o}$ can be computed in $3^r \cdot \text{poly}(n)$ time by enumerating all $S \cap \colorof^{-1}(c_k)$. Thus, the total runtime is $3^r \cdot r^{\bigoh(r)} \cdot k^{\bigoh(k)} \cdot n^{\bigoh(1)} \cdot kc \cdot 3^r \cdot n^{\bigoh(1)} \cdot c \cdot 3^r \cdot n^{\bigoh(1)} = r^{\bigoh(r)} \cdot n^{\bigoh(1)}$, where the final bound follows from $k \leq 2r$ and $c \leq n$.\smallskip

The randomization step (label-coding) can be \emph{de-randomized with $(n,k)$-universal sets} ~\cite{CyganFKLMPPS15}, while maintaining the same asymptotic running time.

\begin{theorem} \mcs is \fpt parameterized by neighborhood diversity, admitting an algorithm running in time $\bigohstar(r^{\bigoh(r)})$, where $r$ is the neighborhood diversity of the input graph.
\end{theorem}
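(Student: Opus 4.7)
My plan is to set up a branching-plus-(label)-coding algorithm whose exponential part depends only on $r$, and then verify its correctness through the structural lemmas already developed. The starting point is Claim~\ref{claim:none-one-all-vertices-from-a-type-color}, which restricts attention to \mcs{}s that intersect each color class of each type in $0$, $1$, or all vertices. This lets us describe an \mcs $S$ by a coarse combinatorial skeleton: the $3$-partition $\T=(\T_0,\T_1,\T_2)$ of the types according to how many distinct colors $S$ picks from each, and for each type an indication of which colors participate. The first step of the proof is therefore to enumerate all such partitions $\T$; since there are only $3^r$ of them, we can afford to branch here.

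Next, I would introduce the notion of a set of \emph{responsible colors} $R$ of size at most $2r$: one color per type in $\T_1$ and two colors per type in $\T_2$, chosen from those actually present in $S$. The critical observation (Claim~\ref{resp}) is that for every vertex $v$, the distance from $v$ to its nearest solution vertex of a different color is realized by some vertex whose color lies in $R$; this is what will make the dependence on $c$ go away. The obstacle is that we cannot guess $R$ directly, because colors are not bounded by $r$, so enumerating $2r$-element subsets of $[c]$ costs $c^{O(r)}$. To avoid this, I would apply label coding: randomly partition $[c]$ into $k\le 2r$ buckets $C_1\uplus\cdots\uplus C_k$, and condition on the ``nice'' event that the (unknown) responsible colors fall in distinct buckets, which occurs with probability at least $k^{-k}$. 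Standard derandomization via $(n,k)$-universal sets (as cited) then gives a deterministic $k^{O(k)}\cdot\mathrm{poly}(n)$ enumeration.

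The heart of the correctness argument is that, once $\T$, a valid occurrence function $\mathit{occ}\colon R\to 2^{[r]}$, and a nice label-coding are fixed, the subproblem of choosing the contribution of each bucket $C_i$ to $S$ decouples across $i$. This is precisely the content of Lemma~\ref{final}: we can replace the portion of $S$ colored from $C_i$ by any minimum-size alternative that (a) selects one color $c_j\in C_i$ to play the role of the responsible color, (b) respects $\mathit{occ}(\rc_i)$ together with the \emph{caution constraints} C1--C3 (so the guessed $\T$ is preserved), and (c) satisfies the local consistency inequalities \eqref{eq:ck_consistency}--\eqref{eq:cj_wrt_ck}. Claim~\ref{resp} is what guarantees that local consistency with respect to the chosen responsible color is equivalent to global consistency, which is the step I expect to require the most care when writing up; the rest is essentially a greedy computation per $(C_i,c_j,o)$ with $|O_j|\le 2^r$ options.

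Finally I would assemble $S=\bigcup_i S_i$ and bound the total running time. The outer branching costs $3^r\cdot r^{O(r)}$ for $(\T,\mathit{occ})$ and $k^{O(k)}\cdot\mathrm{poly}(n)$ for the label coding; inside each fixed choice, for each bucket $C_i$ we loop over $c\le n$ candidate responsible colors and at most $2^r$ subsets $o\in O_j$, while for each non-responsible $c_k\in C_i$ we enumerate its at most $3^r$ admissible subsets in polynomial time. Multiplying through and using $k\le 2r$ yields $\bigohstar(r^{O(r)})$, proving the theorem.
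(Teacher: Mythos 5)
Your proposal follows essentially the same route as the paper: reduce to $0/1/$all solutions via Claim~\ref{claim:none-one-all-vertices-from-a-type-color}, branch on the $3$-partition $\T$ and the occurrence function, use label coding over the at most $2r$ responsible colors, and decouple the buckets via Claim~\ref{resp} and Lemma~\ref{final} before taking the union of the per-bucket minima. The correctness and runtime accounting you describe match the paper's argument, so this is the intended proof.
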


  \section{Acknowledgments}
The first author acknowledges support from the Science and Engineering Research Board (SERB) via the project MTR/2022/000253. The second author would like to thank Akanksha Agrawal (IIT Madras) for formally introducing him to the field of parameterized algorithms, a foundation that helped shape this work.

  \bibliographystyle{alpha}
  \bibliography{references.bib}
\end{document}